\newtheorem{theorem}{Theorem}
\newtheorem{lemma}{Lemma}
\newtheorem{proposition}{Proposition}
\newtheorem{assumption}{Assumption}
\newtheorem{definition}{Definition}
  \let\oldparagraph\paragraph
  \renewcommand{\paragraph}{
    \@ifstar
      \xxxParagraphStar
      \xxxParagraphNoStar
  }
  \newcommand{\xxxParagraphStar}[1]{\oldparagraph*{#1}\mbox{}}
  \newcommand{\xxxParagraphNoStar}[1]{\oldparagraph{#1}\mbox{}}
  \let\oldsubparagraph\subparagraph
  \renewcommand{\subparagraph}{
    \@ifstar
      \xxxSubParagraphStar
      \xxxSubParagraphNoStar
  }
  \newcommand{\xxxSubParagraphStar}[1]{\oldsubparagraph*{#1}\mbox{}}
  \newcommand{\xxxSubParagraphNoStar}[1]{\oldsubparagraph{#1}\mbox{}}
\patchcmd\longtable{\par}{\if@noskipsec\mbox{}\fi\par}{}{}
\def\maxwidth{\ifdim\Gin@nat@width>\linewidth\linewidth\else\Gin@nat@width\fi}
\def\maxheight{\ifdim\Gin@nat@height>\textheight\textheight\else\Gin@nat@height\fi}
\def\fps@figure{htbp}
  \renewcommand*\contentsname{Table of contents}
  \newcommand\contentsname{Table of contents}
  \renewcommand*\listfigurename{List of Figures}
  \newcommand\listfigurename{List of Figures}
  \renewcommand*\listtablename{List of Tables}
  \newcommand\listtablename{List of Tables}
  \renewcommand*\figurename{Figure}
  \newcommand\figurename{Figure}
  \renewcommand*\tablename{Table}
  \newcommand\tablename{Table}
\newcommand{\anon}{1}
\newcommand{\E}{\mathrm{E}}
\newcommand{\N}{\mathbb{N}}
\newcommand{\Z}{\mathbb{Z}}
\newcommand{\bh}{\boldsymbol{h}}
\newcommand{\bl}{\boldsymbol{\lambda}}
\newcommand{\Var}{\mathrm{Var}}
\newcommand{\cum}{\textnormal{cum}}
\newcommand{\argmax}[1]{\underset{#1}{\operatorname{argmax}}}
\begin{document}

\def\spacingset#1{\renewcommand{\baselinestretch}%
{#1}\small\normalsize} \spacingset{1}


\if1\anon
{
  \title{\bf Online Spectral Density Estimation}
  \author{Shahriar Hasnat Kazi\thanks{E-mail: shahriar.kazi16@imperial.ac.uk}\\
  \textit{Department of Mathematics, Imperial College London,}\\
  Niall Adams\\
  \textit{Department of Mathematics, Imperial College London} and\\ Edward A. K. Cohen\\
  \textit{Department of Mathematics, Imperial College London}}
  \maketitle
} \fi

\if0\anon
{
  \bigskip
  \bigskip
  \bigskip
  \begin{center}
    {\LARGE\bf Title}
\end{center}
  \medskip
} \fi

\bigskip
\begin{abstract}
This paper develops the first online algorithms for estimating the spectral density function --- a fundamental object of interest in time series analysis --- that satisfies the three core requirements of streaming inference: fixed memory, fixed computational complexity, and temporal adaptivity. Our method builds on the concept of forgetting factors, allowing the estimator to adapt to gradual or abrupt changes in the data-generating process without prior knowledge of its dynamics. We introduce a novel online forgetting-factor periodogram and show that, under stationarity, it asymptotically recovers the properties of its offline counterpart. Leveraging this, we construct an online Whittle estimator, and further develop an adaptive online spectral estimator that dynamically tunes its forgetting factor using the Whittle likelihood as a loss. Through extensive simulation studies and an application to ocean drifter velocity data, we demonstrate the method's ability to track time-varying spectral properties in real-time with strong empirical performance.
\end{abstract}

\noindent%
{\it Keywords:} Spectral density function, online estimation, forgetting factors, time series analysis
\vfill

\newpage
\spacingset{1.8} 

\section{Introduction}

\label{sect: Intro}

With data being collected at unprecedented volumes and velocities, the analysis of dynamic, continuous data streams presents challenges that traditional \emph{offline} algorithms cannot address. These algorithms rely on storing and preprocessing large datasets, resulting in prohibitive memory and time requirements that make them unsuitable for scenarios requiring real-time, up-to-date results. In contrast, \emph{online algorithms} process data incrementally, as it arrives, operating under strict constraints of fixed memory and computational complexity while avoiding the need to revisit past data \citep{cit4, cit20, cit32}. Beyond the challenges of scale and speed, data streams are also inherently dynamic, as the processes generating the data often evolve over time. This evolving nature introduces a fundamental challenge that online algorithms must also address. For example, if the algorithm is estimating a target quantity of interest from the data, that target may itself be changing over time.

Concept drift refers to changes in time of the data-generating process, which can occur gradually or abruptly. If not addressed, such changes can significantly degrade the performance of estimation algorithms that rely on the assumption of strict or weak stationarity (in the probabilistic sense). To handle this, \emph{temporal adaptivity} becomes essential— the ability of an algorithm to adjust to concept drift while maintaining performance. Importantly, temporal adaptivity must not compromise accuracy and precision when the data-generating process remains stable; under such conditions, the algorithm’s performance should approach that of a stationary estimator. Along with fixed memory and computational efficiency, temporal adaptivity is a core requirement for online algorithms, enabling real-time analysis of dynamic systems.

In this paper, our target of interest is the spectral density function (SDF), a fundamental concept in time series analysis (see \cite{PSD_ref_1} for a comprehensive theoretical and methodological overview). Defined as the Fourier transform of the autocovariance sequence of a weakly stationary process, the SDF provides an information-rich representation of a process’s second-order structure in the frequency domain, revealing periodic behaviour and the dominant frequencies driving its variability. This frequency-domain representation is often preferred over its time-domain counterpart, particularly when a process's variability is linked to underlying physical phenomena or periodic components. As a result, the SDF has been widely used in fields such as astronomy \citep{astr_ref}, cognitive science \citep{cogn_sc_ref}, earth sciences \citep{PSD_ref_2, earth_ref}, electrical engineering \citep{EE_ref}, and finance \citep{finance_ref}. Its estimation remains a topic of significant interest \citep{recent_ref, recent_ref_2, recent_ref_3, DebiasedWhittle}, and its use is expanding to new domains \citep{cohen21,grainger24}.

While the SDF is defined with respect to stationary processes, in reality, data rarely conforms to this assumption; the spectral properties of the data generating process may gradually drift or undergo abrupt changes. This reality motivated the now mature field of time-frequency analysis, both in the formulation of time series models that emit a time-varying spectrum and in computational methods, including the short-time Fourier transform \citep{short_FT_1, short_FT_2, short_FT_3} and the widely celebrated discovery of wavelets \citep{wavelets, wavelets_2}. However, these and subsequent approaches remain offline and are unsuitable for modern applications in which real-time tracking is needed as new data points arrive. We address this shortfall by developing the first online algorithm for the estimation of the SDF that adheres to the requirements of fixed memory, fixed complexity and temporal adaptivity.

To do so, we take the \emph{forgetting factors} approach, popularised in areas such as change-point detection \citep{FF_examples_CPdet_1, FF_examples_CPdet_2}, classification \citep{FF_examples_class_1, FF_examples_class_3}, clustering \citep{FF_examples_clust_1, FF_examples_clust_2}, density estimation \citep{FF_examples_distr}, pattern mining \citep{FF_examples_mining_1, FF_examples_mining_2} and model validation \citep{FF_examples_model}. Forgetting factors are a sequence of scalars taking values in $(0, 1]$, controlling the rate at which past data is forgotten. Forgetting factors provide a continuous analogue to hard forgetting approaches, where each data point contributes equally, or not at all, to the estimator \citep{FF_Christoforos}. The most immediate alternative to forgetting factors is a sliding window. However, specifying the size of the window is non-trivial. In fact, a commonality between the approaches to ensure temporal adaptivity introduced in the literature is the requirement of specifying control parameters. The selection of suitable values for these parameters is a non-trivial task and frequently requires prior knowledge of the system, making these approaches unsuitable when no such information is available. This selection task is at the core of the \textit{adaptive} streaming estimation literature. Of particular relevance to this paper are adaptive forgetting factors \citep{cit7, FF_examples_CPdet_2}, which we incorporate into our algorithms so that the online spectral estimator can learn a local optima for the forgetting factor based on the present behaviour of the data stream it is analysing.

The periodogram remains the most widely used approach to estimating the SDF and sits at the centre of most other estimation schemes that set about to improve upon its bias and variance properties. This includes direct spectral estimators, in which a data taper is used to control bias. In Section 2, after reviewing these in the traditional offline setting, we present a novel online forgetting-factor periodogram. Although the idea of a sequential algorithm for the estimation of the spectral density of a process has found some development in the literature \citep{recursiveslidingwindow}, our method differs in its approach, since we tackle the problem from a tapering angle. Importantly, we show that under the assumption of stationarity it asymptotically approaches the same distribution as its offline counterpart, while retaining fixed and finite complexity, with each data point only used once, upon arrival. To simplify the exposition, we present the forgetting-factor periodogram for a process with a known and constant mean; without loss of generality, this can be assumed to be zero. We then show how the forgetting factor periodogram can be readily extended to processes with an unknown and potentially time varying mean.

In the offline setting, the Whittle estimator has been studied and applied extensively \citep{Whittle_ref, Whittle_ref_2, Whittle_ref_3} for parameteric spectral estimation. The asymptotic distribution of the periodogram is used as a finite data approximation to construct a likelihood for the periodogram in terms of parameters of a specified model (e.g. autoregressive process). This can then be optimised via standard techniques. In Section 3, we use the forgetting-factor periodogram and its derived statistical properties, to develop an online Whittle estimator. In Section 4, we then leverage the Whittle likelihood as a loss function for adaptive forgetting to give an adaptive online spectral estimator.

We evaluate the performance of our online estimators using both simulated and real data. In Section 5, we present a series of simulation experiments that highlight the temporal adaptivity of the algorithms. In Section 6, we apply the method to a time series of drifter velocity data from the Global Drifter Programme, showcasing its capability to track the drifter's position in real-time.


\section{Forgetting Factor Periodogram}
\label{sect: FFP}
\subsection{Processes with a known constant mean}
Let $\{X_t\}_{t\in\Z}$ be a weakly-stationary time series, meaning that
\begin{align*}
	\E[X_t] &= \mu \quad \quad \quad \forall t \in\Z,\\
	\text{Cov}[X_t, X_{t+\tau}] &= \text{Cov}[X_0 , X_\tau] \quad \forall \tau\in\N \quad\forall t \in\Z.
\end{align*}
Let $s_\tau = \text{Cov}[X_t, X_{t+\tau}]$, for $\tau \in \Z$, be the autocovariance sequence. Assuming $\sum_{\tau = -\infty}^\infty |s_\tau| < \infty$, the SDF $S(f)$ of a zero-mean weakly stationary process exists, is continuous and bounded, and is defined as
\begin{equation}
	S(f) = \sum_{\tau = -\infty}^\infty s_\tau e^{-i 2\pi f \tau}, \quad  f \in (-1/2, 1/2].
	\label{eq: spectral density definition}
\end{equation}
The SDF describes the contribution oscillations at frequency $f$ have to the overall variance of a random process. While it encodes equivalent information to the autocovariance sequence (the Fourier transform is bijective), it often offers a richer environment in which to study the second-order-structure. 

Spectral density estimation methods can be broadly classified as parametric or nonparametric. Parametric approaches are primarily based on autoregressive moving average (ARMA) models, but they tend to give misleading inferences when the parametric model is poorly specified \citep{miss1, miss2}. Nonparametric models often rely on the periodogram. For a portion of a zero-mean process $X_1$,...,$X_T$, the periodogram is defined as 
\begin{equation*}
	\widehat{S}_T(f) = \frac{1}{T} \left | \sum_{t = 1}^T X_t e^{-i 2\pi f t} \right |^2, \quad  f \in (-1/2, 1/2].
\end{equation*}
This can be efficiently computed with the fast Fourier transform (FFT).  Note that should $\mu$ be non-zero but known, the time series must be \emph{centered} (have the mean substracted from it) before the periodogram is computed. Failure to do so will result in a significant spike in the periodogram at $f=0$ which can obscure important features.

The periodogram is known to be a biased estimator of the SDF \citep{pukkila1979bias}. This occurs due to a blurring caused by having only a finite sample of the infinite sequence. More specifically,
$$
	\E[\widehat{S}_T(f)] = \int_{-1/2}^{1/2} \mathcal{F}_T(f-f')S(f')d f',
$$
where 
$$
	\mathcal{F}_T(f) = \frac{\sin^2(T\pi f)}{T\sin^2(\pi f)},
$$
is Fejer's Kernel. This effect is sometimes also referred to as a `sidelobe leakage' caused by the poor frequency concentration properties of $\mathcal{F}_T(f)$; for details see \cite{Fejer}. Tapering is a technique used to reduce the kernel's sidelobes. Given a sequence of real valued constants $\bh_T \equiv (h_t)_{t = 1, ..., T}$ such that
$$
	\sum_{t = 1}^T h_t^2 = 1,
$$
the tapered spectral density estimator, often referred to as the direct spectral estimator, is defined as
\begin{equation*}
	\widehat{S}_{T}(f;\bh_T) = \left | \sum_{t = 1}^T h_t X_t e^{-i 2\pi f t} \right |^2, \quad  f \in (-1/2, 1/2].
\end{equation*}
Analogously, we have 
$$
	\E[\widehat{S}_{T}(f;\bh_T)] = \int_{-1/2}^{1/2} \mathcal{H}_T(f-f')S(f')d f',
$$
where $\mathcal{H}_T(f) = |\sum_{t = 1}^T h_t e^{-i 2\pi f t}|^2$, motivating a choice of taper that has a $\mathcal{H}_T(f)$ kernel with better frequency concentration than $\mathcal{F}_T(f)$. An alternative view of a taper is as a tool to control the influence of each datum to the computation of the spectral density estimate. This is analogous to the sequence of scalars used in the forgetting factor paradigm. For a forgetting factor $\lambda$, the taper is defined
$$
h_{t} = C_T(\lambda)^{-1/2} \lambda ^ {T-t},\quad t=1,...,T
$$
where constant $C_T(\lambda)$ is chosen such that $\sum_{t=1}^T h_t^2 = 1$, giving
$$
	h_t = \lambda^{T-t} \sqrt{\frac{1-\lambda^2}{1-\lambda^{2T}}},\quad t=1,...,T.
$$
From this, we propose the \textit{forgetting factor periodogram} (FFP)
$$
	\widehat{S}_T(f;\lambda) = \frac{1}{C_T(\lambda)}\left | \sum_{t = 1}^T \lambda^{T-t} X_t e^{-i 2\pi f t} \right |^2,
$$
where $f \in (-1/2, 1/2]$ and $$C_T(\lambda) = \frac{1-\lambda^{2T}}{1-\lambda^2}.$$ 

To show this fulfils the requirements of an online algorithm, define
$$
	J_T (f;\lambda) = \sum_{t = 1}^T \lambda^{T-t} X_t e^{-i 2\pi f t},
$$
such that $\widehat{S}_T(f;\lambda) = |J_T(f;\lambda)|^2/C_T(\lambda)$. For a new data point $X_{T+1}$, 
\begin{align}
	J_{T+1}(f;\lambda) &=\sum_{t = 1}^{T+1} \lambda^{T+1-t} X_t e^{-i 2\pi f t}\nonumber\\
	&= \lambda\sum_{t = 1}^{T} \lambda^{T-t} X_t e^{-i 2\pi f t} + X_{T+1} e^{-i 2\pi f (T+1)}\nonumber\\
	&= \lambda J_T(f; \lambda) + X_{T+1} e^{-i 2\pi f (T+1)}
	\label{eq: J update}
\end{align}
and, similarly,
\begin{equation}
	C_{T+1}(\lambda) =\sum_{t = 1}^{T+1}\lambda^{2(T+1-t)}
	= \lambda^2\sum_{t = 1}^{T} \lambda^{2(T-t)}  + 1
	= \lambda^2 C_T(\lambda) + 1.
	\label{eq: C update}
\end{equation}
Therefore, to update $\widehat{S}_T(f;\lambda)$ to $\widehat{S}_{T+1}(f;\lambda)$, we only need to store the values of $J_T(f;\lambda)$ and $C_T(\lambda)$; each datum is used only once to perform computations (\ref{eq: J update}) and (\ref{eq: C update}), satisfying $O(1)$ time complexity. To satisfy $O(1)$ memory complexity, we only store the values of $\widehat{S}_T(f;\lambda)$ at a discrete and fixed $M<\infty$ number of frequencies, as is common in offline spectral estimation, which typically uses the FFT algorithm. This means, at any one time, storing $M+1$ values: $J_T (f;\lambda)$ for the $M$ frequencies of interest and $C_T(\lambda)$.

\subsection{Processes with unknown mean}
If the mean of the process is unknown, and may even be varying with time, then it needs to be estimated and subtracted from the process before the SDF is estimated. Here, we show that this can be implemented within the forgetting factor regime and does not prevent the derivation of a recursive formula for the DFT and periodogram. 

Let us expand the notation for the forgetting factor periodogram,
\begin{align*}
    J_{X,T} (f;\lambda) &= \sum_{t = 1}^T \lambda^{T-t} X_t e^{-i 2\pi f t},\\
    \widehat{S}_{X,T}(f;\lambda) &= \frac{|J_{X,T} (f;\lambda)|^2}{C_T(\lambda)},
\end{align*}
where we now include a subscript to indicate the process that is being operated on. Define $\mu_{X,t} = \mathbb{E}[X_t]$ and $\bar{X}_t$ as an estimate for $\mu_{X,t}$. We have demonstrated how $\widehat{S}_{X,T}(f;\lambda)$ can be computed sequentially. The concern is whether the same is true for $\widehat{S}_{Y,T}(f;\lambda)$, where $\{Y_t\}$ is a sequentially centred process $Y_t = X_t-\bar{X}_t$. Note that
\begin{align*}
    J_{Y,T} (f;\lambda) &= \sum_{t = 1}^T \lambda^{T-t} [X_t-\bar{X}_t] e^{-i 2\pi f t},\\
    &= \sum_{t = 1}^T \lambda^{T-t} X_t e^{-i 2\pi f t}-\sum_{t = 1}^T \lambda^{T-t} \bar{X}_t e^{-i 2\pi f t},\\
    &= J_{X,T} (f;\lambda) - J_{\bar X,T} (f;\lambda).
\end{align*}
As shown previously, $J_{X,T} (f;\lambda)$ can be updated sequentially and so can $J_{\bar X,T} (f;\lambda)$ as long as $\bar{X}_T$ allows for a recursive formula. This is indeed the case for the sample mean,
\begin{align*}
    \bar{X}_{T+1} &= \frac{\sum_{t=1}^{T+1} X_t}{T+1},\\
    &= \frac{T}{T+1}\bar{X}_{T} + \frac{X_{T+1}}{T+1}.
\end{align*}
Moreover, the sample mean can be handled with a similar forgetting factor to the periodogram, resulting in a weighted mean,
\begin{align*}
    \bar{X}_{T+1}(\lambda) &= \frac{\sum_{t=1}^{T+1} \lambda^{T+1-t}X_t}{D_{T+1}(\lambda)},\\
    &= \frac{\lambda D_{T}(\lambda)}{D_{T+1}(\lambda)}\bar{X}_{T} + \frac{X_{T+1}}{D_{T+1}(\lambda)},
\end{align*}
where $D_{T+1}(\lambda) = \sum_{t=1}^{T+1}\lambda^{T+1-t} = \lambda D_{T}(\lambda) +1$. While this forgetting factor does not need to be the same as the one used in the computation of the forgetting factor periodogram, the role of the two hyper-parameters as indicators of how fast the process is changing would suggest using the same value as a suitable simplification.

\section{Online Whittle Estimation}
\subsection{Whittle Estimation}

The Whittle likelihood for a stationary (not necessarily Gaussian) process is motivated by an asymptotic approximation \citep{hannan1973, whittle1961} to the distribution of the periodogram, or related non-parametric estimator. In the offline setting, for a centered process, consider
\begin{equation*}
	J_T(f) = \sum_{t = 1}^T X_t e^{-i 2\pi f t}, \quad  f \in (-1/2, 1/2],
\end{equation*}
such that the periodogram $\hat S_T (f) = |J_T(f)|^2/T$. Let $\text{Cum}(X_{t+u_1},X_{t+u_2},...,X_{t+u_{k-1}},X_t)$ denote the $k$th order cumulant function and assume for all $k\geq 2$, $u_1,...,u_{k-1}$ and $t$ that $\text{Cum}(X_{t+u_1},X_{t+u_2},...,X_{t+u_{k-1}},X_t) = \text{Cum}(X_{u_1},X_{u_2},...,X_{u_{k-1}},X_0)$ (i.e. strict-stationarity), in addition to the standard mixing condition \citep{brillinger2001},
\begin{equation}
	\label{mixing}
		\sum_{u_1 = -\infty}^\infty\sum_{u_2 = -\infty}^\infty...\sum_{u_{k-1} = -\infty}^\infty |c_k(u_1, u_2, ..., u_{k-1})| < \infty,
\end{equation}
for all $k\geq 2$, where $c_k(u_1, u_2, ..., u_{k-1})\equiv \text{Cum}(X_{u_1},X_{u_2},...,X_{u_{k-1}},X_0)$. Then, for $f\notin \{-1/2, 0, 1/2\}$, $${T^{-1/2} J_T(f) \longrightarrow \mathcal{CN}(0, S(f))},$$ as $T\longrightarrow\infty$, where $\mathcal{CN}(\mu, \Sigma)$ denotes a circularly symmetric complex normal distribution. Furthermore, for Fourier frequencies $f_k = k/T$, then $\{T^{-1/2} J(f_k);k=1,...,\lfloor T/2\rfloor\}$ are asymptotically pairwise independent. Hence, $\{\widehat{S}_T(f_k) = T^{-1}|J(f_k)|^2 ;  k=1...\lfloor T/2\rfloor\}$ behave as asymptotically independent exponential random variables with means $S(f_k)$, and equivalently, $\{2 \widehat{S}_T(f_k)/S(f_k); k=1...\lfloor T/2\rfloor\}$ are asymptotically independent $\chi^2_2$ random variables \citep{priestley1981}.
 Guided by this, an approximate log-likelihood for $\{\widehat{S}_T(f_k);k=1,...,\lfloor  T/2\rfloor\}$ is 
\begin{equation}
	\mathcal{L}_T(S) = -\frac{1}{T} \sum_k \left (\log S(f_k) + \frac{\widehat{S}_T(f_k)}{S(f_k)} \right).
	\label{eq: Whittle Likelihood}
\end{equation}
This is the Whittle likelihood approximation. Note that for fixed $S$, as $T\longrightarrow\infty$, 
$$
	\E\left[\mathcal{L}_T(S)\right] \longrightarrow \E\left[\mathcal{L}(S)\right] = -\int_{-1/2}^{1/2} \left (\log S(f) + \frac{S^*(f)}{S(f)} \right) d f,
$$
where $S^*$ is the true spectrum. Therefore, because $\E[\mathcal{L}(S)]$ is maximized at $S = S^*$, $\mathcal{L}_T(S)$ should be a reasonable objective function for the identification of the spectrum. This is a key property of the Whittle likelihood from which consistency results can be developed \citep{hannan1973}. From here on, \textit{Whittle estimation} will refer to the inference procedure where, for a parametric family of SDFs $\{S_\phi(f);\phi\in\Phi\}$ parameterised by $\phi$ in parameter space $\Phi$, the \textit{Whittle estimator} is
$$
	\widehat{\phi}_{T} = \argmax{\phi\in\Phi}\  \mathcal{L}_{T}(S_\phi).
$$
In later sections, we consider the ARMA processes, or more specifically the autoregressive process $\{X_t\}$ of the form
$X_t = \sum_{i=1}^p\phi_i X_{t-i} + \epsilon_t$ where $\epsilon_t\sim \mathcal{N}(0, \sigma^2)$. Here, the target of inference is $\phi = (\phi_1, ..., \phi_p, \sigma^2)$. The Whittle likelihood is useful because it involves $S$ directly, in contrast to the exact likelihood, which involves $S$ indirectly through the autocovariances. Results for autoregressive moving average models on the higher-order efficiency of maximising (\ref{eq: Whittle Likelihood}) are provided in \cite{taniguchi1987} and \cite{taniguchi2002}. Similar results for long-memory time series were established by \cite{fox1986large}, \cite{dahlhaus1989efficient} and \cite{giraitis1990central} in parametric models and by \cite{robinson1994semiparametric} in semiparametric models. In a nonparametric set-up, Whittle likelihood has also been used for automated spline smoothing of the periodogram \citep{wahba1980automatic}, sieve maximum likelihood estimation \citep{chow1985sieve}, penalized maximum likelihood estimation \citep{Whittle_ref_2}, polynomial spline fitting \citep{kooperberg1995rate}, local-likelihood-based estimation \citep{fan1998automatic} and Bayesian estimation of the spectral density function \citep{carter1997semiparametric}.

\subsection{Forgetting factor Whittle estimation}
\label{sect: Whittle Estimation with FFP}
Having defined the Whittle likelihood in the offline setting as an approximate likelihood for the periodogram, we seek to use the same approximation for the FFP in the online setting. However, we need to show that the limiting distribution of $\widehat{S}_T(f;\lambda)$ as $T \rightarrow \infty$, is appropriate. In Appendix A, we formally set out and prove a theorem which shows that $$\left[C_T(\bl_{T-1})\right]^{-1/2} J_T(f;\bl_{T-1})\sim  \mathcal{CN}(0, S(f))$$  as $T\rightarrow\infty$, under the same assumptions of strict stationarity and meeting the mixing condition  (\ref{mixing}). Note there is a requirement in the asymptotic framework that the forgetting factor has a time dependency and simultaneously must tend to $1$ as $T \rightarrow \infty$; this induces the dependency on a sequence $\bl_{T-1}\equiv(\lambda_t)_{t=1,...,T-1}$ of time-varying forgetting factors, defined up until time $(T-1)$. This is needed to have an asymptotically growing sample size, but shows that the Whittle likelihood is a valid approximation in the forgetting regime and further motivates the development of an adaptive forgetting factor periodogram, in which the forgetting factor does increase through time when the process remains stationarity. This is presented in Section \ref{sect: Adap FFP} where the notation and time-varying forgetting factor estimators for the SDF are further explained and developed.

Remaining in the fixed forgetting factor setting for the moment, define the (fixed) Forgetting Factor Whittle Likelihood as
\begin{equation}
	\mathcal{L}_T(S;\lambda) = -\frac{1}{T} \sum_k \left (\log S(f_k) + \frac{\widehat{S}_T(f_k;\lambda)}{S(f_k)} \right),
	\label{eq: FF Whittle Likelihood}
\end{equation}
such that $\mathcal{L}_T(S;1) \equiv \mathcal{L}_T(S)$. Given a parametric family of SDFs, $\{S_\phi|\phi\in \Phi\}$, define the \textit{Forgetting Factor Whittle Estimator} (FFWE), $\widehat{\phi}_T^{(\lambda)}$, as
\begin{equation}
	\widehat{\phi}_T(\lambda) = \argmax{\phi}\ \mathcal{L}_T(S_\phi;\lambda).
	\label{eq: argmax L}
\end{equation}
The advantage of using $\widehat{\phi}_T(\lambda)$ as the estimator for the parameters of a process is that it inherits the temporal adaptivity of the FFP; the re-weighting of the data occurs when computing $\widehat{S}_T(f_k;\lambda)$. As this quantity is the only way for the data to enter the computation of $\mathcal{L}_T(S_\phi;\lambda)$, the re-weighting automatically adjusts the likelihood and thus the parameter estimate $\widehat{\phi}_T(\lambda)$.

Solving (\ref{eq: argmax L}) often involves a non-linear maximization problem. Finding a solution in this way for each new observation violates the complexity constraints for the algorithm to be truly online. We therefore opt to instead use gradient descent to approximate a solution for the $(T+1)^{th}$ datum based on the solution for the $T^{th}$ datum.

Assume we have a sequence $\{X_t\}$ and parameter estimates $\{\hat{\phi}_t(\lambda); t = 1, ..., T\}$. For a new observation $X_{T+1}$ we approximate $\hat{\phi}_{T+1}(\lambda)$ with
\begin{equation*}
	\widehat{\phi}_{T+1}(\lambda) = \widehat{\phi}_T(\lambda) + r_\phi\nabla_\phi \mathcal{L}_T\left(S_{\widehat{\phi}_T(\lambda)};\lambda\right),
\end{equation*}
where $r_\phi$ is the learning rate. Typically, the learning rate for a gradient descent algorithm is used to control the speed at which we want our estimated solution to approach the true solution \citep{gradient_descent}. However, in the case of data streams, where the data generating process may undergo some conceptual drift, $r_\phi$ further serves as a control parameter for how quickly the FFWE can adapt to changes in the process.

\section{Adaptive Forgetting Factor Spectral Estimation}
\label{sect: Adap FFP}

This forgetting factor scheme, like many others, suffers from a major drawback: there is no clear way to set $\lambda$ a priori. This motivates adaptive forgetting factors where we let an online, adaptive method tune the forgetting factor to the data. The method we use is based on stochastic gradient descent and our construction requires that at time $T$ we already have a time-dependent sequence of forgetting factors $\bl_{T-1} = (\lambda_t)_{t=1,...,T-1}$. In practice, a burn-in period is used to initialise this.

Given $\bl_{T-1}$ have been pre-defined, then our (unnormalised) data-taper becomes
$$
h_t = \left\{\begin{array}{ll}
	1 & t=T; \\
	\prod_{s=t}^{T-1}\lambda_s & t = 1,...,T-1; \\
	0 & \text{otherwise.}
\end{array}
\right. 
$$
From this, the FFP $\hat{S}_T(f;\bl_{T-1})$ becomes 
$$
\hat{S}_T(f;\bl_{T-1}) = \frac{1}{C_T(\bl_{T-1})}\left|J_T(f;\bl_{T-1}) \right|^2
$$
where $J_T(f;\bl_{T-1}) = \sum_{t=1}^T h_t X_t e^{-i 2\pi ft}$ and $C_T(\bl_{T-1})= \sum_{t=1}^T h^2_t$. Using these, the definition of the adaptive forgetting factor Whittle likelihood $\mathcal{L}_T(S;\bl_{T-1})$ follows naturally from (\ref{eq: FF Whittle Likelihood}). Furthermore, the online update rules for $J_T(f;\bl_{T-1})$ and $C_T(\bl_{T-1})$  are derived in an analogous manner to their fixed forgetting factor counterparts, giving 
\begin{align}
J_{T+1}(f;\bl_{T}) & = \lambda_T J_T(f;\bl_{T-1}) + X_{T+1}e^{i2\pi f(T+1)} \label{updateJAFF}\\
C_{T+1}(\bl_{T})& = \lambda_{T}C_T(\bl_{T-1})+ 1\label{updateCAFF}
\end{align}

To perform updates (\ref{updateJAFF}) and (\ref{updateCAFF}) we first need to define the new time $T$ forgetting factor $\lambda_T$. To do so, we require a cost function $L_{T}(\bl_{T-1})$ off which to learn, and minimize it with one-step gradient descent 
\begin{equation}
	\lambda_{T} = \lambda_{T-1} - r_\lambda \frac{\partial}{\partial \pmb{\lambda}_{T-1}} L_{T}(\bl_{T-1}),
	\label{eq: lambda update}
\end{equation}
where $r_\lambda$ is a learning rate parameter. 

Consider a real-valued quantity of interest $Y(\pmb{\lambda}_T)$ that is a function of $\pmb{\lambda}_T$ and a sequence of observations $\{Z_i\}_{i = 1,...,T+1}$, such that
\begin{align}
	Y(\bl_T) & = \lambda_T Y(\bl_{T-1}) + Z_{T+1} \label{eq: form Y} \\
	& = \sum_{i=1}^{T+1}\left[\prod_{j=i}^T\lambda_j\right]Z_i.\nonumber
\end{align}
Defining the derivative of $Y(\pmb{\lambda}_T)$ with respect to $\pmb{\lambda}_T$ as
$$
\frac{\partial}{\partial \pmb{\lambda}_T} Y(\pmb{\lambda}_T) = \lim_{\epsilon \rightarrow 0} \frac{1}{\epsilon}\left[Y(\pmb{\lambda}_T + \epsilon) - Y(\pmb{\lambda}_T)\right],
$$
where
$$
\pmb{\lambda}_T + \epsilon = (\lambda_1 + \epsilon, ..., \lambda_T + \epsilon),
$$
it is shown in \cite{cit7} that
\begin{equation}
	\frac{\partial}{\partial \pmb{\lambda}_T} Y(\pmb{\lambda}_T) = \lambda_{T}\frac{\partial}{\partial \pmb{\lambda}_{T-1}} Y(\pmb{\lambda}_{T-1})+ Y(\pmb{\lambda}_{T}).
	\label{eq: sequential gradient formula}
\end{equation}
The update steps (\ref{updateJAFF}) and (\ref{updateCAFF}) are in the form of (\ref{eq: form Y}). This allows us to take advantage of (\ref{eq: sequential gradient formula}), substituting $Y(\pmb{\lambda}_T)$ for $J_{T+1}(f;\bl_{T})$ and $C_{T+1}(\bl_{T})$, resulting in the update formulas
\begin{equation*}
	\frac{\partial}{\partial \pmb{\lambda}_T} J_{T+1}(f;\bl_{T}) = \lambda_{T}\frac{\partial}{\partial \pmb{\lambda}_{T-1}} J_{T}(f;\bl_{T-1}) + J_{T}(f;\bl_{T-1}),
	\label{eq: sequential gradient formula J}
\end{equation*}
and
\begin{equation*}
	\frac{\partial}{\partial \pmb{\lambda}_T} C_{T+1}(\bl_T) = \lambda_{T}\frac{\partial}{\partial \pmb{\lambda}_{T-1}} C_{T}(\bl_{T-1}) + C_{T}(\bl_{T-1}).
	\label{eq: sequential gradient formula C}
\end{equation*}

Computing $\frac{\partial}{\partial \pmb{\lambda}_T}J_{T+1}(f;\bl_T)$ and $\frac{\partial}{\partial \pmb{\lambda}_T}C_{T+1}(\bl_T)$ allows us to compute $\frac{\partial}{\partial \pmb{\lambda}_T}\widehat{S}_{T+1}(f;\bl_T)$ and thus $\frac{\partial}{\partial \pmb{\lambda}_T}L_{T+1}(\bl_T)$ for any cost function dependent on $\widehat{S}_{T+1}(f;\bl_T)$. We are now left with the choice of a suitable cost function to minimize, with the Whittle Likelihood making for an obvious objective function. Including $\mathcal{L}(S_{\widehat{\phi}_{T+1}};\bl_T)$ in the cost function allows us a seamless way to incorporate the performance of the current estimators in the tuning regime. Defining
$L_{T+1}(\bl_T)= - \mathcal{L}_{T+1}(S_{\widehat{\phi}_{T+1}};\bl_T)$, and making the substitution in (\ref{eq: lambda update}) results in the update step
$$
\lambda_{T+1} = \lambda_T + r_\lambda \frac{\partial}{\partial \pmb{\lambda}_T} \mathcal{L}(S_{\widehat{\phi}_{T+1}};\bl_T).
$$
We call this procedure \emph{adaptive forgetting factor Whittle estimation} (AFFWE).
\section{Experiments}

To assess the performance of the presented online algorithms, we start by looking at signals composed of a single major frequency in noise and show how the FFP is able to track changes in this major frequency with a higher degree of fidelity when compared to the periodogram. We then look at a situation where the interest is in estimating the entire SDF as opposed to focusing on a single major frequency. In particular, we study the performance of the FFWE and the AFFWE when applied to AR(2) processes. An advantage of looking at this specific family of SDFs is that we can draw parallels to the first simulation study on a single frequency signal plus noise due to the fact that some AR($2$) processes exhibit strong cyclical behaviour around a single frequency. We also present a simulation study on AR(3) processes, to investigate the effect of more parameters in the model.
\subsection{Single frequency}

Let $\{X_t\}$ be the random process $X_t = \sin(g(t) + \zeta) + \epsilon_t$. Here, $\{\epsilon_t\}$ is a standard Gaussian white noise process, $\zeta \sim U[-\pi, \pi]$, and $g(t)$ is a modulating function such that $g'(t)\in(0, 0.5)$ for all $t$. Its derivative $g'(t)$ is the instantaneous frequency of the process at time $t$. For a process with a single dominant frequency $\eta$, we have $g'(t) = \eta$ and $g(t) = \eta t$ (the constant is absorbed into $\zeta$). The corresponding SDF for such a time series is $S(f;\eta) = \sigma^2 + k\delta(f-\eta)$, where $\delta(\cdot)$ is the Dirac delta function.
We simulate a process with a changing instantaneous frequency by letting
\begin{equation}
\label{changing freq}
g'(t) = (1+0.6\cdot\sin(\gamma t))/4,
\end{equation}
where $\gamma>0$ is a parameter controlling the rate at which the process undergoes concept drift. This results in $$g(t) = \frac{t}{4}+\frac{3}{20\gamma}(1-\cos(\gamma t)).$$

Given an estimate $\hat{S}_T(f)$ for the SDF $S_T(f)$, we define the estimator for the instantaneous frequency $g'(T)$ as
\begin{equation}
	\hat{g}_T = \argmax{f}\{\hat{S}_T(f)\}.
	\label{eq: hat f'}
\end{equation}
Moreover, define $\hat{g}_T(\lambda)$ as the estimator obtained by using the forgetting factor periodogram $\hat{S}_T(f;\lambda)$ in (\ref{eq: hat f'}). Figures \ref{fig: single f example gamma 1} and \ref{fig: single f example gamma 8} show the FFP based estimator when applied to this process, where $\gamma = 1/1000$ and $\gamma = 8/1000$, respectively. Recall that $\lambda = 1$ is equivalent to using a sequentially updated periodogram. 

In Figure \ref{fig: single f example gamma 1} we see how the base periodogram struggles to adapt to the changing form of the underlying data generating process and indeed obscures some of the features, such as the trough in the second half of the time series. On the other hand, the two estimates obtained using forgetting factors result in a higher fidelity to the ground truth. We also note that, depending on the value of $\gamma$ (how quickly the process is undergoing concept drift), the performance of the two forgetting factors differs substantially. In general, we see that a higher value of $\lambda$ (less forgetting) results in the estimate lagging behind the true value. However, $\gamma$ influences how closely the estimate tracks the true value of $g'(T)$. When concept drift is more pronounced (larger $\gamma$), a smaller value for $\lambda$ (more forgetting) produces estimates that better follow the true dynamics. This is expected, since a rapidly changing process naturally benefits from an estimator that adapts quickly by giving less weight to outdated data.

\begin{figure}[H]
\centering
\begin{subfigure}{0.45\linewidth}
	\includegraphics[width=\linewidth]{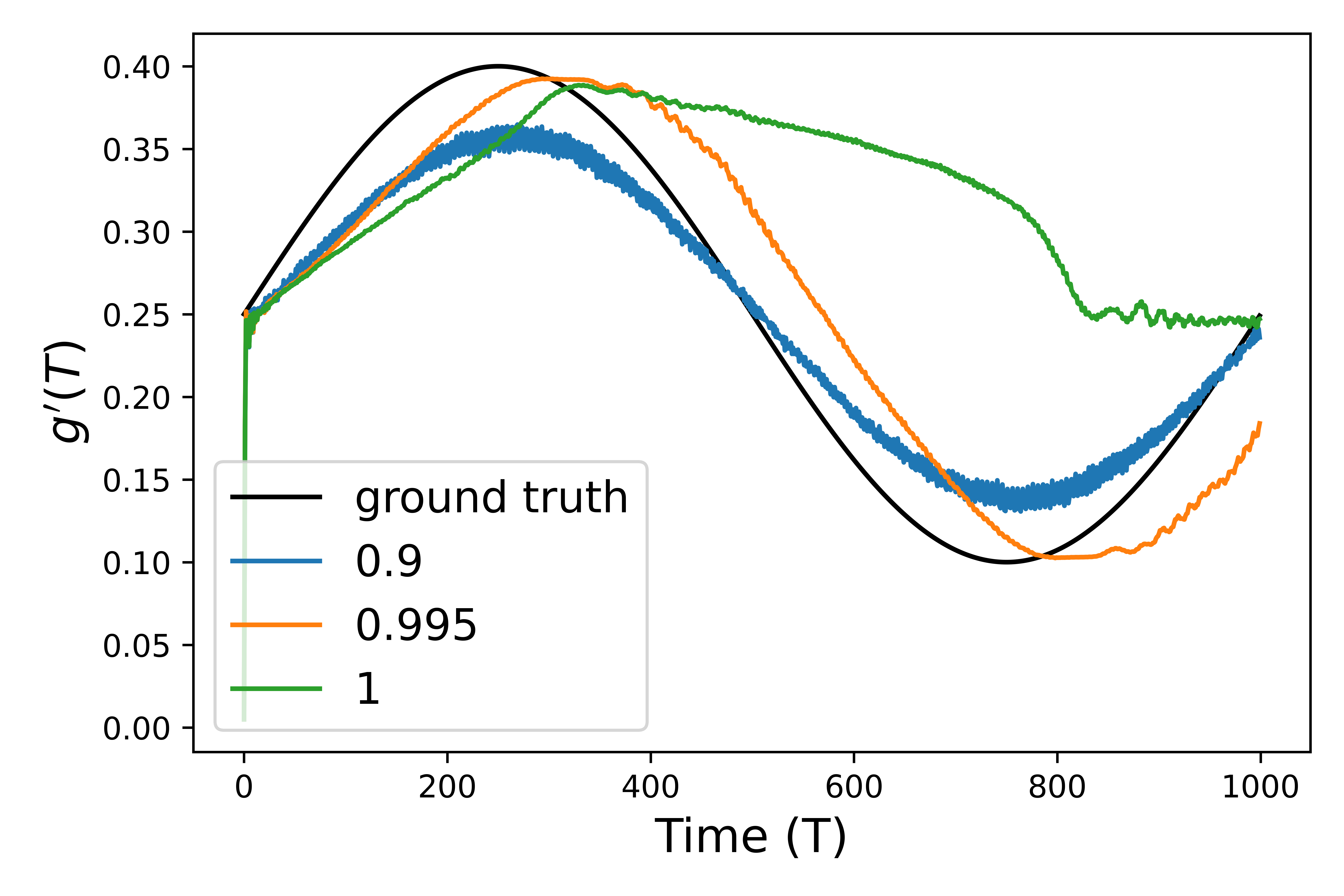}
	\caption{} \label{fig: single f example gamma 1}
\end{subfigure}
\begin{subfigure}{0.45\linewidth}
	\includegraphics[width=\linewidth]{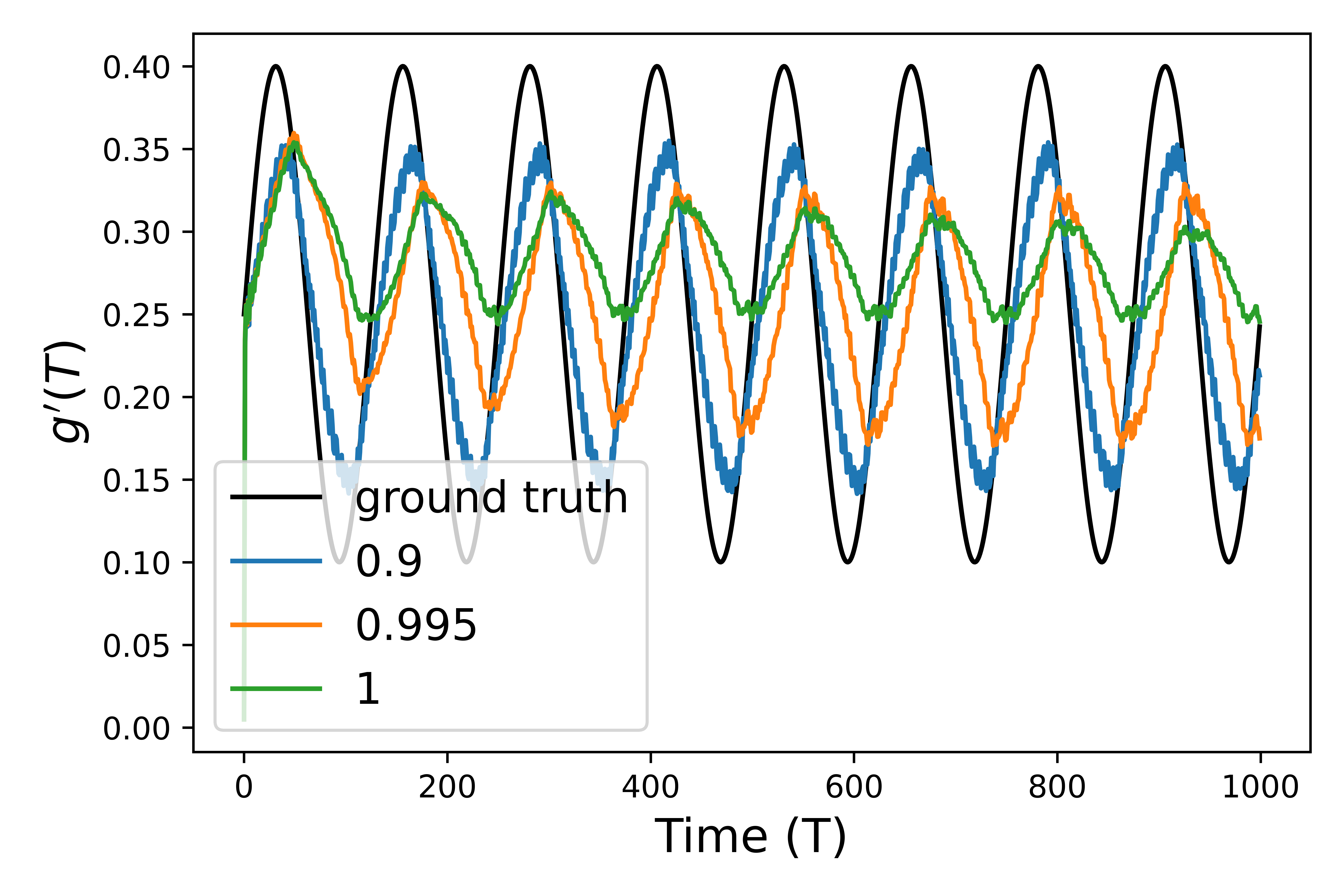}
	\caption{} \label{fig: single f example gamma 8}
\end{subfigure}
\caption{Estimate derived via the FFP for the dominant frequency of a noisy signal composed of a pure sinusoidal and a noise component. The true frequency of the sinusoidal changes over time as shows by the `ground truth' curve. The other three curves represents estimates at three different forgetting factors levels, as shown in the legend. (a) is an example of a slowly changing dominant frequency (low degree of concept drift). (b) is an example of fast changing dominant frequency (high degree of concept drift). The figures were generated using 10000 Monte Carlo simulations.}
\end{figure}

Define the Integrated Squared Error of the estimator $\hat{g}_T(\lambda)$ as 
\begin{equation*}
	\text{ISE}(\{\hat{g}_T(\lambda)\}|g(\cdot)) = \sum_{t = 1}^T \left[\hat{g}_T(\lambda) - g'(T)\right]^2.
\end{equation*}
We will refer to the mean integrated squared error (MISE) as the mean ISE for the estimator across Monte Carlo simulations.

In Figure \ref{fig: single f mse} we compare the MISE of our estimator for a range of forgetting factors as we vary the concept drift parameter $\gamma$. Immediately evident is the fact that we are able to reaffirm the previously stated observation that as $\gamma$ increases, lower values of $\lambda$ give a better performance. Furthermore, note that all forgetting factors estimators perform better than the simple periodogram ($\lambda = 1$).

\begin{figure}[H]
	\centering
	\includegraphics[width=0.6\linewidth]{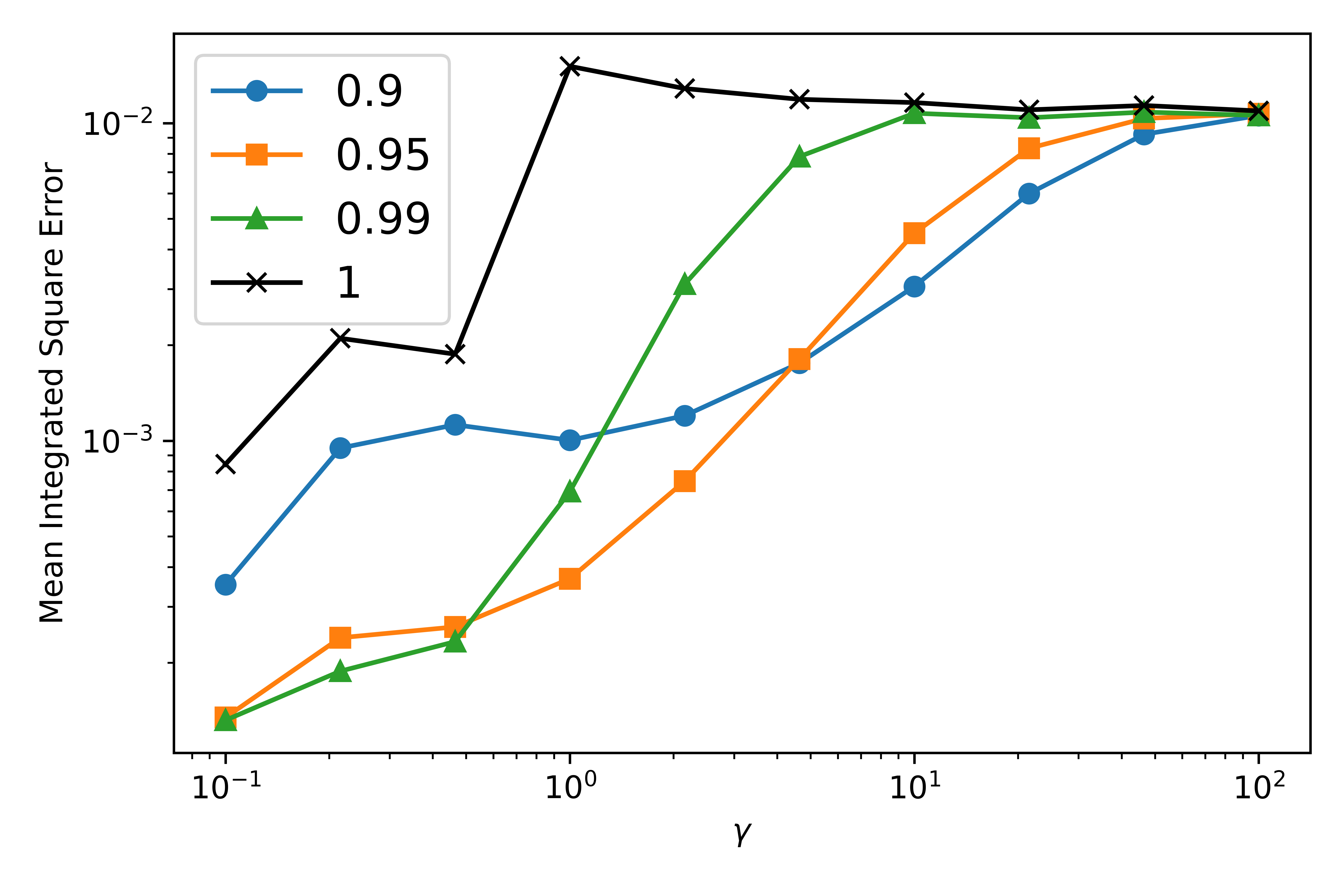}
	\caption{The MISE for the FFP based estimator for the central frequency of a signal, as given in (\ref{changing freq}). Each curve represents a different level of forgetting factor $\lambda$ (see legend). The fact that the curves for each value of $\lambda$ overtake each other at different values of $\gamma$ provides evidence of the how the degree of concept drift inherent to the data generating process has an effect on the best value for the control parameter $\lambda$. Estimates obtained via 1000 Monte Carlo simulations.}
	\label{fig: single f mse}
\end{figure}

\subsection{Autoregressive Processes}

We demonstrate the adaptive forgetting factor approach of Section \ref{sect: Adap FFP} on autoregressive processes. An AR$(p)$ process has the form $\Phi(B)X_t = \epsilon_t$, where $B$ is the backward shift operator $BX_t \equiv X_{t-1}$ and 
$\Phi(z) \equiv 1 - \phi_{1} z - \phi_{2} z^2 - ... - \phi_{p} z^p$
is the \emph{characteristic polynomial} of the AR($p$) process, where $\phi_{1}, \phi_{2}, ..., \phi_{p}$ are constants such that $\phi_{p} \neq 0$, and $\{\epsilon_t\}$ is a Gaussian white noise process with mean $0$ and variance $\sigma^2$ \citep{AR_processes}. Given the AR$(p)$ process satisfies the stationary constraints that the roots of $\Phi(z)$ lie outside of the unit circle, its SDF is
\begin{equation*}
	S(f) = \frac{\sigma^2}{\left|1- \phi_{1} e^{-i 2\pi f} - \phi_{2} e^{-4\pi i f} - ...  -\phi_{p} e^{-2p\pi i f}\right|^2}.
\end{equation*}
In particular, for an AR$(2)$ process whose characteristic polynomial $\Phi(z)$ has complex conjugate roots $z_0$ and $z_0^\ast$, parameterised as $z_0=(1/r)\exp(-i\pi f' t)$, with $0<r<1$ to ensure stationarity, the SDF takes the form \citep{AR_processes} 
\begin{equation*}
	S(f) = \frac{\sigma^2}{(A - 2r\cos(2\pi (f' + f)))(A - 2r\cos(2\pi (f' - f)))},
\end{equation*}
where $A = 1+r^2$. The spectrum will be at its largest when the denominator is at its smallest. For $r$ close to 1 this occurs at $f = \pm f'$. Furthermore, as $r$ approaches 1 from below, this peak at $\pm f'$ becomes more pronounced. In this setting, the process is said to have \textit{pseudo-cyclical} behaviour, as opposed to the deterministic cycle showing up as a sharp spike in the previous set of simulations.
Instead of smooth concept drift, we now consider change-points. In Figure \ref{fig: WE example a} and Figure \ref{fig: WE example b} we have the estimates obtained via the Forgetting Factor Whittle Estimator (FFWE) for an AR$(2)$ process undergoing a change-point at $t = 10000$. These were obtained via 1000 Monte Carlo simulations. The black line indicates the ground truth, whereas all the others are Monte Carlo mean estimates for various values of the forgetting factor $\lambda$. Once again, note the periodogram ($\lambda = 1$) struggles to react and adapt to the change-point. Furthermore, there is an apparent trade-off between reaction speed and bias. The estimator for $\lambda = 0.9$ reacts to the change-point significantly faster than the estimator for $\lambda = 0.999$. However, this comes at the cost of increased bias.

In Figures \ref{fig: AWE phi12 a} and \ref{fig: AWE phi12 b} we apply the Adaptive Forgetting Factor Whittle Estimator (AFFWE) to the same data. A key observation is that the learning rate has very little effect on the performance of the estimators. This provides evidence for the Adaptive Whittle Estimator being significantly more robust to hyperparameter misclassification when compared to the Whittle Estimator with fixed forgetting factors.

An interesting behaviour seen in all four figures is a jump in the estimates for $\phi_2$ despite the true parameter not undergoing a change-point. This dynamic comes as a result of the shape taken by the Whittle likelihood function of an AR(2) process as more data is observed. Appendix B provides more details on this phenomenon.

Figures \ref{fig: WE example ar3} and \ref{fig: AWE phi12 ar3} show the algorithms being applied to a AR(3) process. As expected, the variance of our estimates is higher due to increased model complexity.

\begin{figure}[H]
	\centering
	\begin{subfigure}{0.45\linewidth}
		\includegraphics[width=\linewidth]{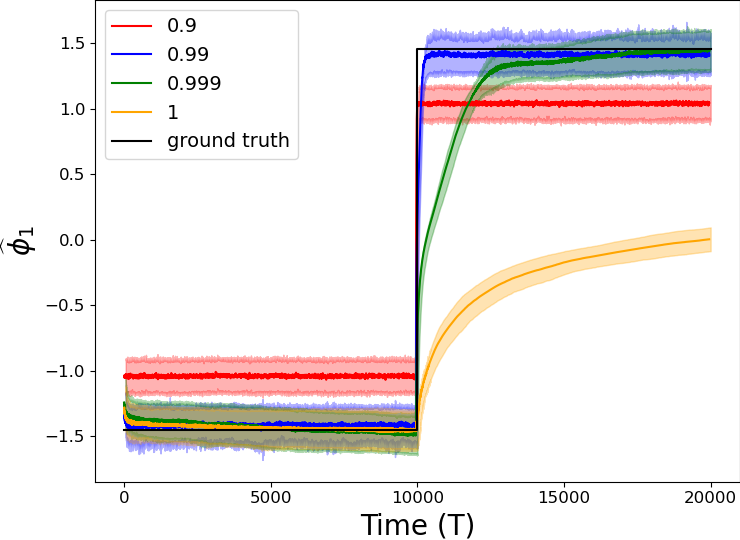}
		\caption{} \label{fig: WE example a}
	\end{subfigure}
	\begin{subfigure}{0.45\linewidth}
		\includegraphics[width=\linewidth]{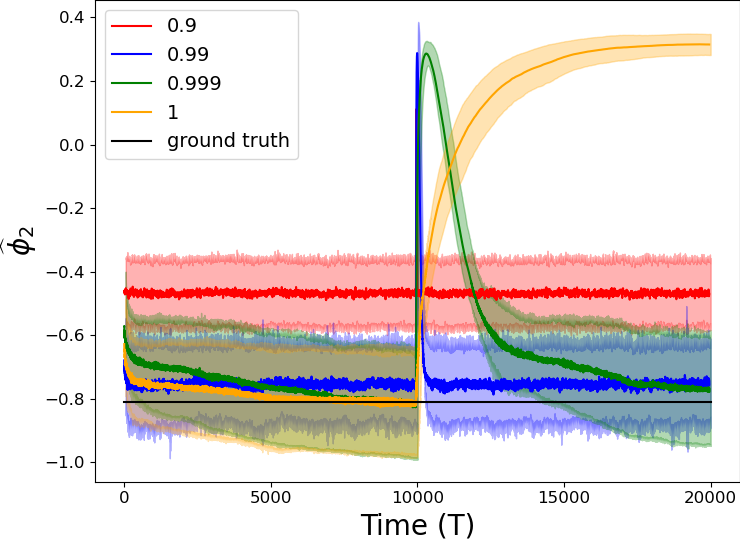}
		\caption{} \label{fig: WE example b}
	\end{subfigure}
	\caption{The curves show the mean of the estimates obtained via Forgetting Factor Whittle Estimation (see legend). The data being analysed was drawn from AR(2) processes undergoing a change-point at $t = 10000$. The black lines represent the ground truth for the AR(2) parameters, $\phi_1$ (a) and $\phi_2$ (b). The shaded area represents the estimated standard deviation. Estimates obtained via 1000 Monte Carlo simulations.}
\end{figure}

\begin{figure}[H]
	\centering
	
	\begin{subfigure}{0.45\linewidth}
		\includegraphics[width=\linewidth]{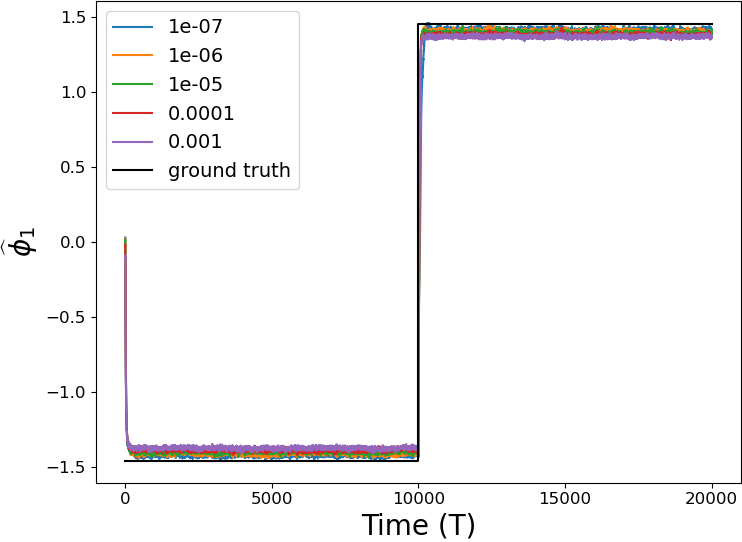}
		\caption{} \label{fig: AWE phi12 a}
	\end{subfigure}
	\begin{subfigure}{0.45\linewidth} 
		\includegraphics[width=\linewidth]{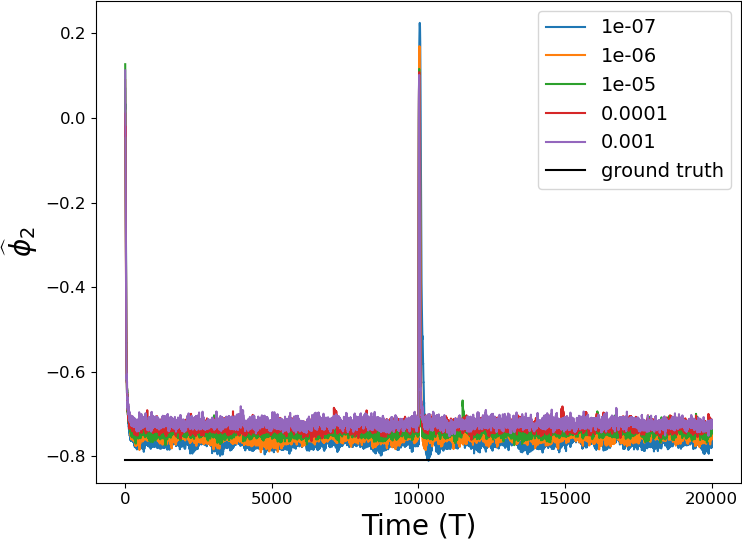}
		\caption{} \label{fig: AWE phi12 b}
	\end{subfigure}
	\caption{The curves show the mean of the estimates obtained via Adaptive Forgetting Factor Whittle Estimation (see legend). The data being analysed was drawn from AR(2) processes undergoing a change-point at $t = 10000$. The black lines represent the ground truth for the AR(2) parameters, $\phi_1$ (a) and $\phi_2$ (b). Estimates obtained via 1000 Monte Carlo simulations.}
	\label{fig: AWE phi12}
\end{figure}

\begin{figure}[H]
	\centering
	\includegraphics[width=0.6\linewidth]{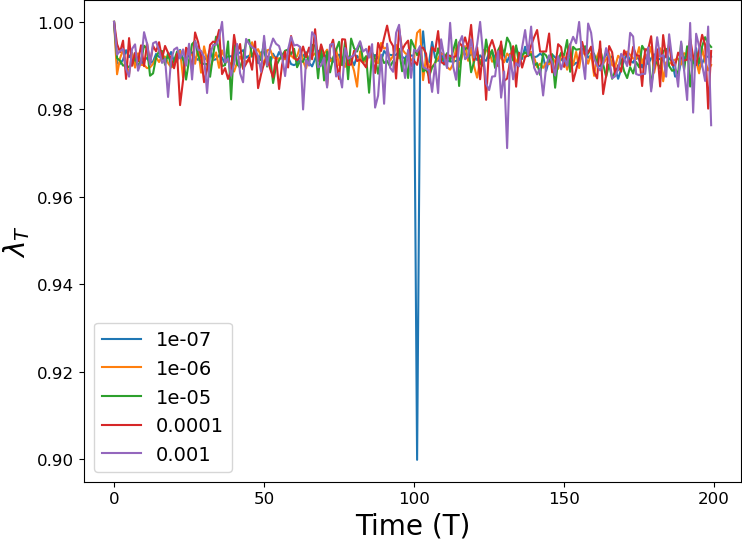}
	\caption{Average forgetting factor at each time point associated with the estimation carried out in Fig. \ref{fig: AWE phi12}. Notice the large spike toward lower values at $t=10000$, when the change-point occurs. The effect of the learning rate for the forgetting factor can be seen in the variance of the adaptive forgetting factor and the rate at which it raises back up after the change-point.}
	\label{fig: AWE eta}
\end{figure}

\begin{figure}[H]
	\centering
	\includegraphics[width=0.32\linewidth]{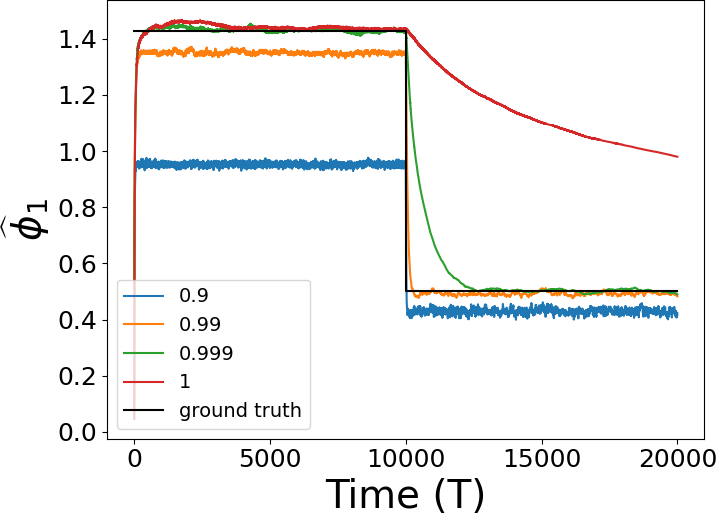}
	\includegraphics[width=0.32\linewidth]{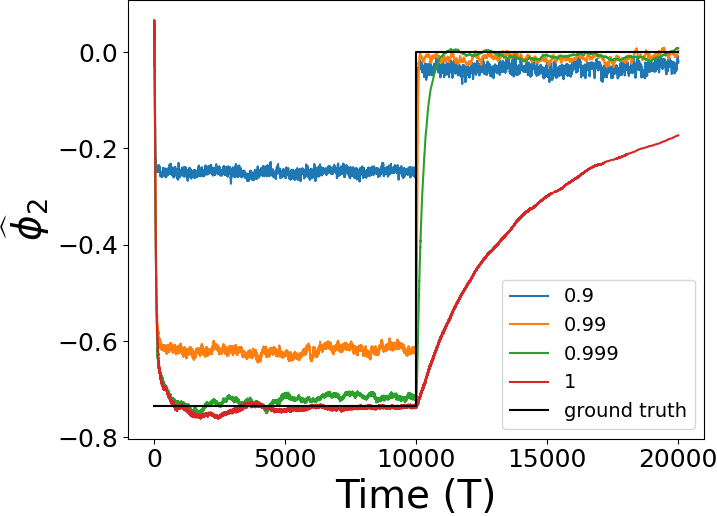}
	\includegraphics[width=0.32\linewidth]{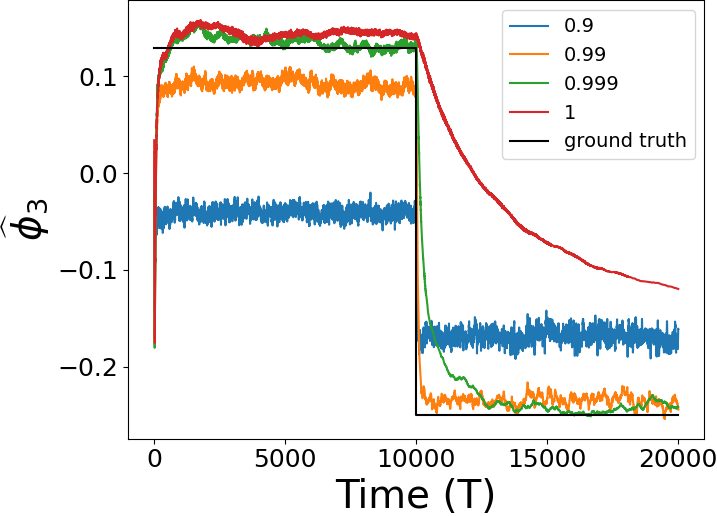}
	\caption{Mean estimates obtained via Forgetting Factor Whittle Estimation (FFWE). The data being analysed was drawn from AR(3) processes undergoing a change-point at $t = 10000$. The black lines represent the ground truth for the AR(3) parameters: $\phi_1$, $\phi_2$  and $\phi_3$. The other lines are estimates with varying levels of forgetting factors, as shown in the legends. Estimates obtained via 1000 Monte Carlo simulations.}
	\label{fig: WE example ar3}
\end{figure}

\begin{figure}[H]
\centering
\includegraphics[width=0.32\linewidth]{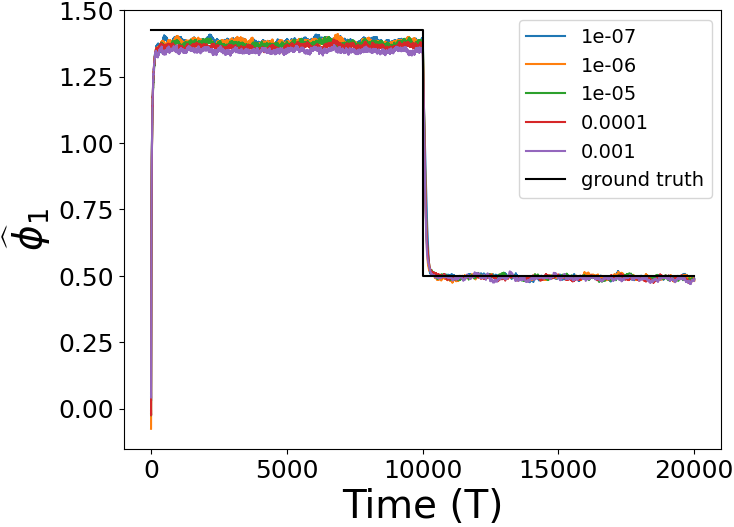}
\includegraphics[width=0.32\linewidth]{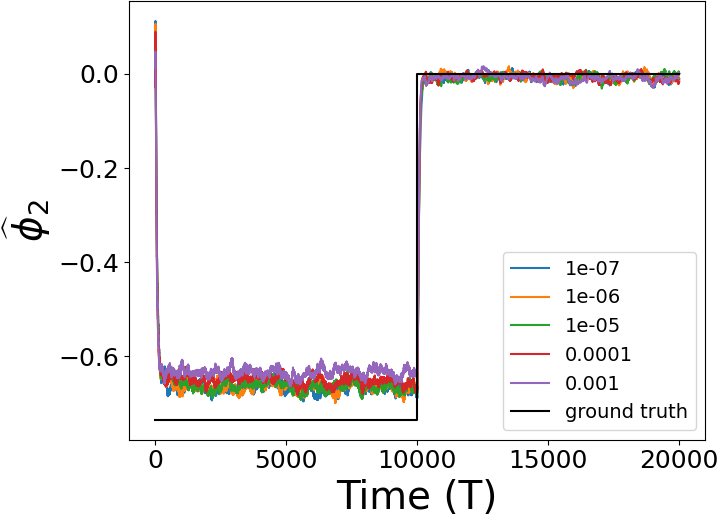}
\includegraphics[width=0.32\linewidth]{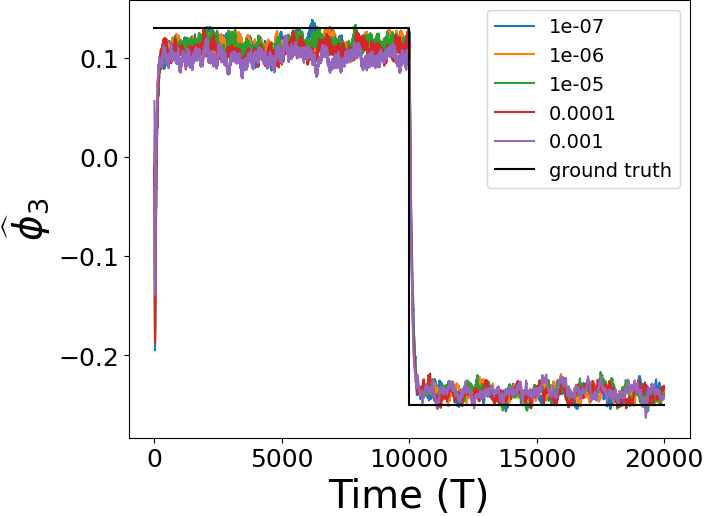}
\caption{Mean estimates obtained via Adaptive Forgetting Factor Whittle Estimation (AFFWE). The data being analysed was drawn from AR(3) processes undergoing a change-point at $t = 10000$. The black lines represent the ground truth for the AR(3) parameters: $\phi_1$, $\phi_2$  and $\phi_3$. The other lines are estimates with varying levels of forgetting factors, as shown in the legends. Estimates obtained via 1000 Monte Carlo simulations.}
\label{fig: AWE phi12 ar3}
\end{figure}

\subsection{Beta Prior}
From simulation studies, even on stationary processes, we find that the average value for the adaptive forgetting factors hovers around $0.9$. Such a value leads to quite an aggressive discarding of past data, even when the data generating process is stationary, causing the variance of the estimator to be higher than necessary. Taking inspiration from the Bayesian regularization literature \citep{chaari2013sparse}, we modify the cost function to include a Beta$(\alpha, 1)$ prior for $\lambda_{T+1}$,
\begin{equation}
	L_{T+1}(\bl_T) = - \mathcal{L}_{T+1}(S_{\hat\phi_{T+1}};\bl_T) - \log P_{\text{Beta}}(\bl_{T}|\alpha, 1).
	\label{eq: lambda cost function beta}
\end{equation}
The reasoning behind this modification is to induce a predisposition in the algorithm towards a forgetting factor of $\alpha/(1+\alpha)$. The choice of $\alpha$ is not trivial. However, we find a value of 1000 or higher results in similar performances while significantly reducing the noise in sequence $\bl_T$.  In Appendix C we provide more examples of how $\alpha$ affects the sequence $\bl_T$.

\begin{figure}[H]
	\centering
	\includegraphics[width=0.6\linewidth]{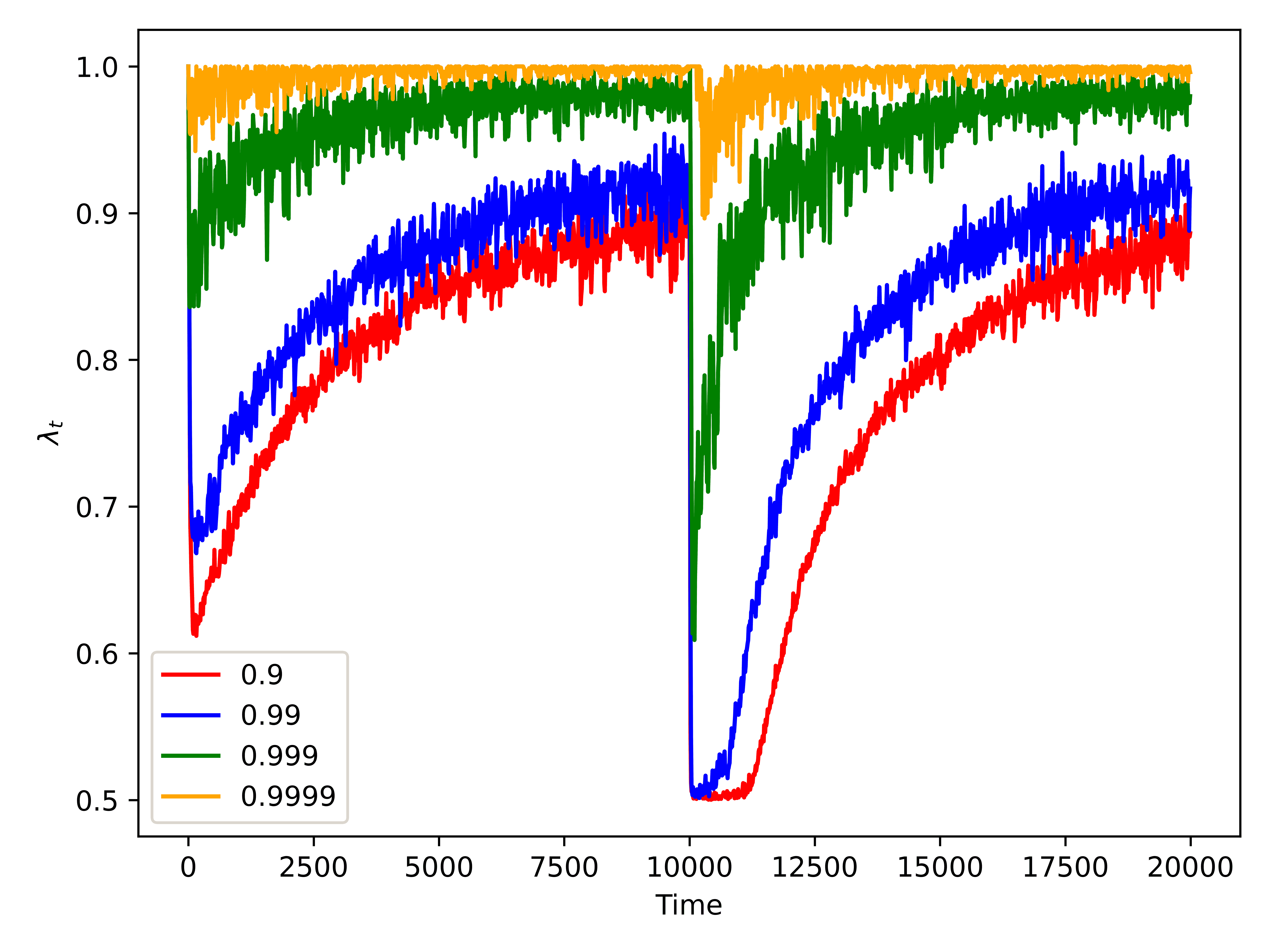}
	\caption{Effect of $\alpha$ in (\ref{eq: lambda cost function beta}) for the estimation of the process used in Figure \ref{fig: AWE phi12}. The process has a change-point at $t = 10000$, where we can see the forgetting factor $\lambda$ drop to suddenly. Note that, based on the value of $\alpha$, the mean $\lambda_t$ levels off at a different value. The legend shows the values for $\alpha/(1+\alpha)$, the mean of the Beta prior on $\lambda_t$.}
	\label{fig: different alpha}
\end{figure}

\section{Oceanic Dataset}
\label{sect: ocean data}

We analyse real-world data from the Global Drifter Program \citep{oceandataset}. Specifically, we build on the work of \cite{ocean_ar2_paper} in which the authors model jointly the latitudinal and longitudinal velocities obtained from instruments known as drifters, which drift freely according to ocean surface flows \citep{ferrari2009ocean}. Those velocities are modelled as the aggregation of two independent complex valued processes, one of which is non-stationary and which we model as a complex-valued AR(1) process. Figure \ref{fig: trajectory} shows the path taken by one of these drifters.
The velocity time series are non-stationary, as they are modulated by oscillations that change in frequency over time. The oscillations are known as inertial oscillations – one of the most ubiquitous and readily observable features of the ocean currents accounting for approximately half of the kinetic energy in the upper ocean \citep{ferrari2009ocean}. Inertial oscillations arise owing to the deviation of the rotating Earth from a purely spherical geometry, together with the appearance of the Coriolis force in the rotating reference frame of an Earth-based observer. The modulation of these oscillations occurs because the drifters are changing latitude.

\begin{figure}[H]
	\centering
	\includegraphics[width=0.6\linewidth]{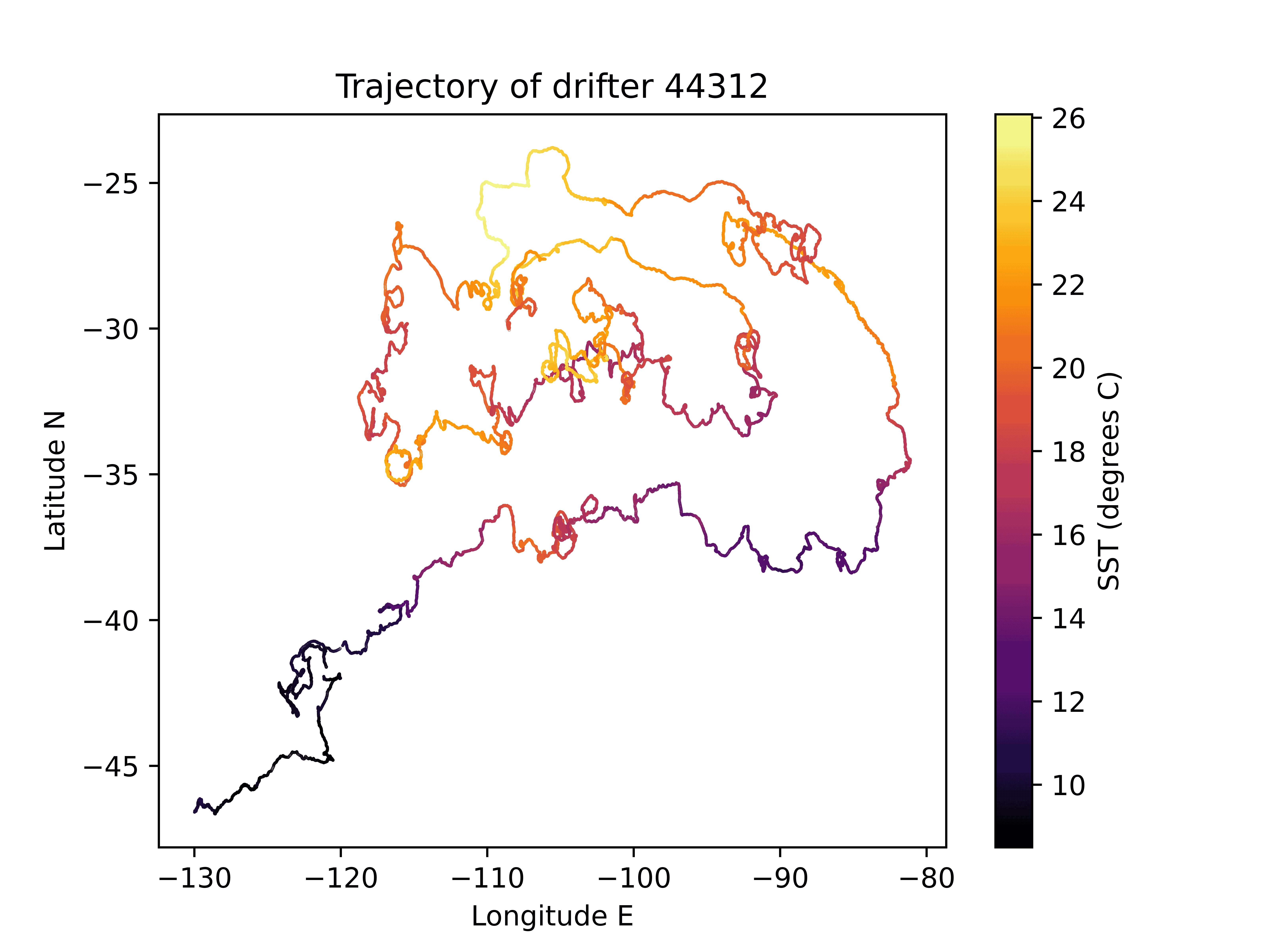}
	\caption{Trajectory of drifter 44312 from the Global Drifter Program.}
	\label{fig: trajectory}
\end{figure}

The second-order structure of complex-valued processes is characterised by both the autocovariance $s_\tau = \text{Cov}[Z_t^*, Z_{t+\tau}]$ and the relation $r_\tau = \text{Cov}[Z_t, Z_{t+\tau}]$ sequences.  Complex valued processes, unlike real-valued, no longer have a spectrum that needs to satisfy Hermitian symmetry, and if the series represents motion in the plane, the positive and negative frequencies represent anti-clockwise and clockwise rotations respectively. Following the classical modelling framework \citep{walden2013rotary} for complex-valued processes, we shall assume that the relation sequence takes the value zero for all lags, and the definition for the SDF in (\ref{eq: spectral density definition}) still applies.

We treat the observed time series as a realisation of the random process $\{Z_t\}_{t\in\mathbb{Z}} = \{X_t + iY_t\}_{t\in\mathbb{Z}}$, where $\{X_t\}$ is the longitudinal velocities and $\{Y_t\}$ is the latitudinal velocities. Complex-valued process $\{Z_t\}$ is conventionally modelled as a modulated AR(1) process
\begin{align*}
	Z_t &= r e^{i\beta_t} Z_{t-1} + \epsilon_t,\\
	Z_0 &\sim \mathcal{CN}\left(0, \frac{\sigma^2}{1-r^2}\right),\\
	\epsilon_t &\sim \mathcal{CN}(0, \sigma^2),
	\label{eq: ocean eq}
\end{align*}
where $\{\beta_t\}$ is a deterministic sequence, which is a function of the \emph{inertial frequency} $\omega_t = -k\sin(\xi_t)$, for some $k\in\mathbb{R}$ where $\xi_t$ is the latitude of the drifter at time $t$. The parameters $r$ and $\sigma^2$ are themselves functions of five parameters, $\{A,\lambda,B,h,\alpha\}$, and if $\omega_t$, and hence $\beta_t$, were fixed, the modulated AR(1) process above would be stationary with SDF
\begin{equation}
	\label{eq:SDF_ocean}
S(f) = \frac{A^2}{(f-\omega)^2 + \lambda^2} + \frac{B}{(\omega^2 + h^2)^\alpha}.
\end{equation}
In practice, $\omega_t$ does vary over time as the drifter moves, but it is known. To estimate the unknown parameters, \cite{ocean_ar2_paper} appropriately demodulate the process using the known $\omega_t$, thereby transforming it into a stationary AR(1) process. They then apply (offline) Whittle estimation.

In this paper, to illustrate the power of our method on real-data, we take a different approach. We apply the FFWE to the raw time series, using (\ref{eq:SDF_ocean}) to estimate all parameters, including online tracking of $\omega_T$ as $T$ increases. We are then able to compare this tracked online estimate with the ground truth.

Figure \ref{fig: S ocean comparison} compares the FFWE at the end of the run against the offline (batch) Whittle estimate on the demodulated process as presented in \cite{ocean_ar2_paper}. While the batch Whittle estimation on the demodulated process more closely matches the periodogram, the performance of the two methods are comparable. In Figure \ref{fig: wf tracking}, we plot our estimates for $\omega_T$ against the trajectory, as a function of $T$. The FFWE is clearly able to track the ground truth without assuming any prior information aside from the assumption of a modulated complex AR(1) process. Furthermore, even a very modest level of forgetting ($\lambda = 0.9999$) gives significant advantages of no forgetting ($\lambda = 1$) in the method's responsiveness to drift.

\begin{figure}[H]
\centering
\includegraphics[width=0.6\linewidth]{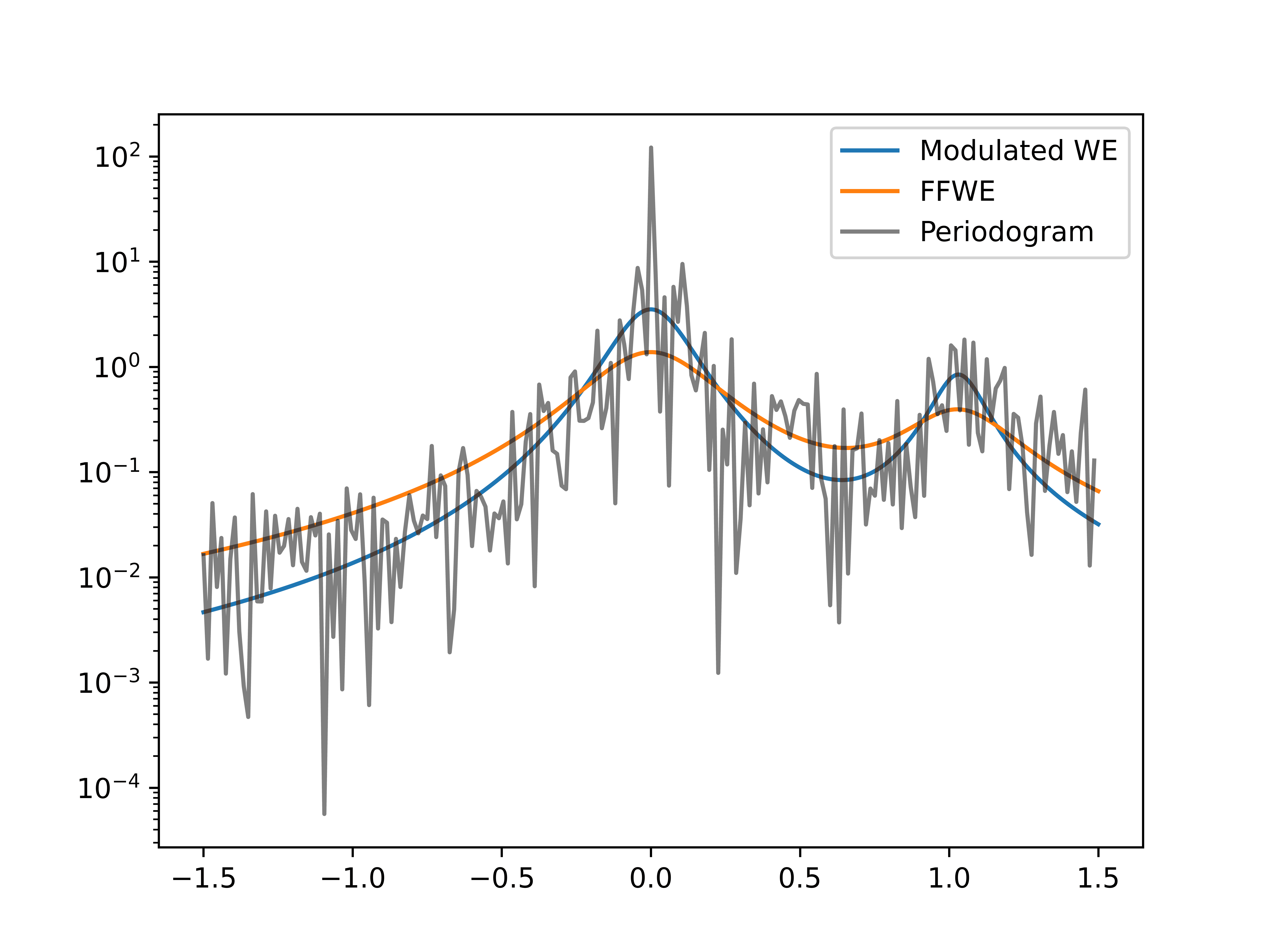}
\caption{Estimated SDFs for the velocity data of drifter 44312 using a periodogram, the FFWE and the demodulation procedure of \cite{ocean_ar2_paper}.}
\label{fig: S ocean comparison}
\end{figure}
\begin{figure}[H]
\centering
\includegraphics[width=0.6\linewidth]{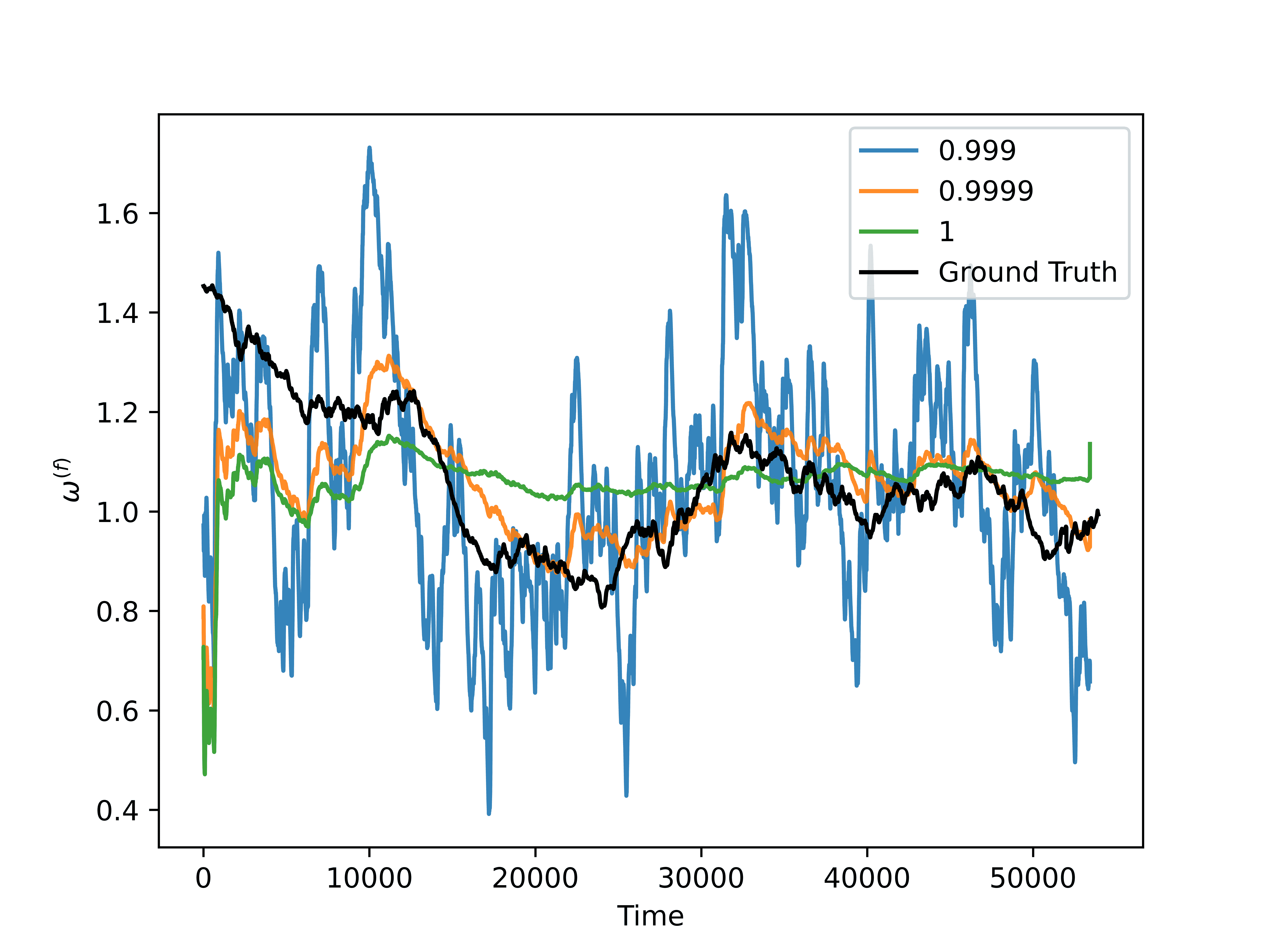}
\caption{Online tracking of $\omega_T$ as a function of $T$ using the FFWE on the velocity data of drifter 44312. The black curve represents the ground truth and the other curves are estimates for different levels of forgetting (see legend).}
\label{fig: wf tracking}
\end{figure}

\section*{Funding}
Shahriar Hasnat Kazi acknowledges funding from the Engineering and Physical Sciences
Research Council (EPSRC), grant number EP/S023151/1. Ed Cohen acknowledges funding
from the EPSRC NeST Programme, grant number EP/X002195/1.

\section*{Disclosure statement}\label{disclosure-statement}

The authors declare no conflict of interest.

\section*{Supplementary Materials}

The oceanographic data used in Section \ref{sect: ocean data}, are openly available at \url{https://doi.org/10.25921/x46c-3620}.

\begin{appendices}
	
	\section{Asymptotic Distribution of the Forgetting Factor Periodogram}
	\label{proof: 1}
	To derive the asymptotic distribution of the FFP, we adapt the proof for the asymptotic distribution of the periodogram \citep{brillinger2001}. 
	
	\begin{assumption}
		\label{assump:FF}
		Let $\bl_T = (\lambda_t)_{t=1,...,T}$ be an increasing sequence of forgetting factors with $\lambda_t\in(0,1)$ for all $t>0$ and $1-\lambda_T^2 = o(T^{-\gamma})$ as $T\rightarrow\infty$, for some $\gamma>1$.
	\end{assumption}	
	
	\begin{definition}\label{def:ffp}
		Let $\{X_t\}_{t\in\mathbb{Z}}$ be a zero-mean second-order stationary process. For a portion of the process $X_1,...,X_T$, define $J_T(f;\bl_{T-1}) = \sum_{t=1}^T h_{T,t} X_t e^{-i2\pi ft}$ and $C_T(\bl_{T-1}) = \sum_{t=1}^T h_{T,t}^2$, where
	   \begin{equation*}
		h_{T,t} = \begin{cases}
			1, & t = T\\
			\prod_{s = t}^{T-1}\lambda_s, & t = 1, ..., T-1\\
			0, &\textnormal{otherwise.}
		\end{cases}
	\end{equation*}
	The FFP is defined as $$\hat S_T(f;\bl_{T-1}) = \frac{1}{C_T(\bl_{T-1}) }\left|J_T(f;\bl_{T-1}) \right|^2.$$
		\end{definition}
    
    \begin{definition}\label{def:kfm} For data taper $\bh_T = (h_{T,t})_{t=1,...,T}$, the $k$-th Fourier moment is defined as
    \begin{equation*}
		H^{(T)}_k(f) = \sum_{t=1}^T [h_{T,t}]^k\exp\{-i2\pi f t\}.
	\end{equation*}
	\end{definition}
	To aid notation, from here on we use $h_T(t)\equiv h_{T,t}$, retaining $t$ as a discrete index in $\mathbb{Z}$.
	
    \begin{lemma}
		Let $h_T(t)$ and $H_k^{(T)}(f)$ be as defined in Definitions \ref{def:ffp} and \ref{def:kfm}, respectively. For any $u_1, ..., u_{k-1}\in \{0, ..., T\}$ there exists finite $K>0$ s.t.
		$$\left|\sum_t h_T(t+u_1)\cdots h_T(t+u_{k-1})h_T(t) \exp\{-i 2\pi f t\} -H^{(T)}_k(f)\right|\leq K\sum_{a=1}^{k-1} |u_a|.$$
		\label{lemma 1}
	\end{lemma}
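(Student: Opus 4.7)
The plan is to bound the stated quantity by a sum of ``discrepancies'' of the form $\sum_t |h_T(t+u_a)-h_T(t)|$ for $a=1,\ldots,k-1$, and then to control each discrepancy using the multiplicative structure of the taper together with the summability of $1-\lambda_s$ implied by Assumption \ref{assump:FF}.

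First I would apply the triangle inequality to bring the modulus inside the sum, so that the left-hand side of the lemma is at most $\sum_t |h_T(t+u_1)\cdots h_T(t+u_{k-1})h_T(t) - h_T(t)^k|$. A standard telescoping expansion then gives
$$
\prod_{a=1}^{k-1} h_T(t+u_a)\cdot h_T(t) - h_T(t)^k = \sum_{a=1}^{k-1}\Bigl[\prod_{b<a} h_T(t+u_b)\Bigr]\bigl[h_T(t+u_a)-h_T(t)\bigr]\,h_T(t)^{k-a}.
$$
Since $h_T(s)\in[0,1]$ for every $s$, every surrounding factor is bounded by $1$, so the target reduces to $\sum_{a=1}^{k-1}\sum_t |h_T(t+u_a)-h_T(t)|$.

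Second, I would establish $\sum_t |h_T(t+u_a)-h_T(t)|\leq C\, u_a$ uniformly in $T$ and $a$. For ``interior'' indices $t$ with $1\leq t$ and $t+u_a\leq T$, the identity $h_T(t+u_a) = h_T(t)/\prod_{s=t}^{t+u_a-1}\lambda_s$, combined with $h_T(t+u_a)\leq 1$ and the elementary inequality $1-\prod x_i\leq \sum(1-x_i)$ for $x_i\in[0,1]$, gives
$$
|h_T(t+u_a)-h_T(t)| = h_T(t+u_a)\Bigl(1-\prod_{s=t}^{t+u_a-1}\lambda_s\Bigr) \leq \sum_{s=t}^{t+u_a-1}(1-\lambda_s).
$$
Swapping the order of summation yields $\sum_t\sum_{s=t}^{t+u_a-1}(1-\lambda_s) = u_a \sum_s(1-\lambda_s)$, and the inner sum is finite because Assumption \ref{assump:FF} forces $1-\lambda_s\leq 1-\lambda_s^2 = o(s^{-\gamma})$ with $\gamma>1$. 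The at most $2u_a$ ``boundary'' indices, where exactly one of $h_T(t),h_T(t+u_a)$ vanishes, each contribute at most $1$, adding a further $2u_a$ to the bound.

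Putting the pieces together yields the lemma with $K = \sum_s(1-\lambda_s)+2$, a finite constant independent of $T$, $f$, and the $u_a$'s. The main subtlety is ensuring that the bulk bound is uniform in $T$; this is exactly what Assumption \ref{assump:FF} is designed to deliver, by placing $1-\lambda_s$ strictly inside $\ell^1$. Everything else is mechanical: the telescoping identity, the trivial bound $h_T(s)\leq 1$, and an exchange of summation order.
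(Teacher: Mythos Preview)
Your proposal is correct, and the first half (triangle inequality followed by a telescoping/product expansion to reduce to $\sum_{a=1}^{k-1}\sum_t |h_T(t+u_a)-h_T(t)|$) is essentially identical to the paper's argument, which reaches the same reduction via an induction that is equivalent to your telescope.

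The genuine difference is in how $\sum_t |h_T(t+u_a)-h_T(t)|$ is controlled. You exploit the multiplicative structure of the taper, writing $h_T(t+u_a)-h_T(t)=h_T(t+u_a)\bigl(1-\prod_{s=t}^{t+u_a-1}\lambda_s\bigr)$, bounding by $\sum_{s}(1-\lambda_s)$, and then invoking Assumption~\ref{assump:FF} to make $\sum_s(1-\lambda_s)$ summable; this gives $K=2+\sum_s(1-\lambda_s)$. The paper instead telescopes once more in the shift variable, $h_T(t+u_a)-h_T(t)=\sum_{b=1}^{u_a}[h_T(t+b)-h_T(t+b-1)]$, swaps the sums, and uses only the monotonicity of $h_T$ on $\{1,\ldots,T\}$ together with $h_T(T)=1$, $h_T(T+1)=0$ to evaluate $\sum_t|h_T(t+b)-h_T(t+b-1)|=2-h_T(b)\leq 2$ exactly. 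This yields the universal constant $K=2$, valid for \emph{any} sequence $\bl_{T-1}\in(0,1)^{T-1}$, without appealing to Assumption~\ref{assump:FF}. In effect, the paper is using that the total variation of $h_T$ is at most $2$, whereas you are using a rate condition on the $\lambda_s$; both suffice here since Assumption~\ref{assump:FF} is in force for the main theorem anyway, but the paper's route keeps Lemma~\ref{lemma 1} self-contained and delivers a sharper, sequence-independent constant. A minor point: your displayed equality $\sum_t\sum_{s=t}^{t+u_a-1}(1-\lambda_s)=u_a\sum_s(1-\lambda_s)$ should be an inequality $\leq$, since boundary values of $s$ are hit fewer than $u_a$ times; this does not affect the argument.
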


    \begin{proof}
    	It is true that
  \begin{align*}
  	&\left|\sum_{t=1}^T h_T(t+u_1)\hdots h_T(t+u_{k-1})h_T(t) \exp\{-i2\pi f t\} -H^{(T)}_k(f)\right|\\
  	&\left|\sum_{t=1}^T h_T(t+u_1)\hdots h_T(t+u_{k-1})h_T(t) \exp\{-i2\pi f t\} -\sum_{t=1}^T [h_T(t)]^k\exp\{-i2\pi f t\}\right|\\
  	\leq & \sum_{t=1}^T \left| h_T(t+u_1)\hdots h_T(t+u_{k-1})h_T(t) \exp\{-i2\pi f t\} -[h_T(t)]^k\exp\{-i2\pi f t\}\right|\\
  	= & \sum_{t=1}^T \left| h_T(t+u_1)\hdots h_T(t+u_{k-1})h_T(t) -[h_T(t)]^k\right|\\
  	= & \sum_{t=1}^T \left|\prod_{a = 1}^{k} h_T(t+u_a) - [h_T(t)]^k\right|,\,\,\,\, \text{where $u_k = 0$.}\\
  \end{align*}
Assume that
  \begin{equation}
  	\left| h_T(t+u_1)\hdots h_T(t+u_{n-1})h_T(t+u_n) -[h_T(t)]^n\right| \leq  \sum_{a=1}^n |h_T(t+u_a) - h_T(t)|.
  	\label{eq: induction eq}
  \end{equation}
  Then it follows that
  \begin{align}
  	&\left| h_T(t+u_1)\hdots h_T(t+u_{n})h_T(t+u_{n+1}) -[h_T(t)]^{n+1}\right|\nonumber \\
    &= \left| \left[h_T(t+u_1)\hdots h_T(t+u_{n})\right]h_T(t+u_{n+1}) -[h_T(t)]^{n}h_T(t)\right|\nonumber\\
  		&\leq \max(h_T(t+u_{n+1}), [h_T(t)]^{n})\left[\left|h_T(t+u_1)\hdots h_T(t+u_{n})-[h_T(t)]^{n}\right| + |h_T(t+u_{n+1}) - h_T(t)|\right] \label{step}\\
  		&\leq \max(h_T(t+u_{n+1}), [h_T(t)]^{n})\left[\sum_{a=1}^n |h_T(t+u_a) - h_T(t)| + |h_T(t+u_{n+1}) - h_T(t)|\right]\nonumber\\
  		&\leq \max(h_T(t+u_{n+1}), [h_T(t)]^{n})\left[\sum_{a=1}^{n+1} |h_T(t+u_a) - h_T(t)|\right]\nonumber\\
  		&\leq \sum_{a=1}^{n+1} |h_T(t+u_a) - h_T(t)|,\,\,\,\,\,\,\text{because $h(x)\leq 1$ for all $x$.}\nonumber
  \end{align}
  Step (\ref{step}) comes from the fact that given positive values $a, b, c, d$, we have that
  \begin{align*}
  	|ab - cd| &= |(a-c)b+c(b-d)|\\
  	&\leq \max(b, c)\left|(a-c) + (b-d)\right|\\
  	&\leq \max(b, c)\left[|a-c| + |b-d|\right].
  \end{align*}
  Given (\ref{eq: induction eq}) is true for $n$ implies the same for $n+1$, and that it is trivially true for $n=1$, by induction it is true for all positive integers $n$. Therefore,
  \begin{align*}
  	&\left|\sum_{t=1}^T h_T(t+u_1)\hdots h_T(t+u_{k-1})h_T(t) \exp\{-i2\pi f t\} -H^{(T)}_k(f)\right|\\
  	\leq & \sum_{t=1}^T \left|\prod_{a = 1}^{k} h_T(t+u_a) - [h_T(t)]^k\right|\\
  	\leq & \sum_{t=1}^T \sum_{a=1}^k |h_T(t+u_a) - h_T(t)|.
  \end{align*}
  Suppose for convenience $u_a\geq0$, (other cases are handled similarly), then
  \begin{align}
  	&\left|\sum_{t=1}^T h_T(t+u_1)\hdots h_T(t+u_{k-1})h_T(t) \exp\{-i2\pi f t\} -H^{(T)}_k(f)\right|\nonumber\\
  	\leq & \sum_{t=1}^T \sum_{a=1}^k |h_T(t+u_a) - h_T(t)|\nonumber\\
  	= & \sum_{t=1}^T \sum_{a=1}^k |(h_T(t+u_a) - h_T(t+u_a-1)) + (h_T(t+u_a-1) - h_T(t+u_a-2)) \nonumber\\
    &\qquad\qquad\qquad\qquad+ \cdots + (h_T(t+1) - h_T(t))|\nonumber\\
  	= & \sum_{a=1}^{k-1} \sum_{t=1}^T \left|\sum_{b = 1}^{u_a} h_T(t+b) - h_T(t+b-1)\right|\nonumber\\
  	\leq & \sum_{a=1}^{k-1} \sum_{t=1}^T \sum_{b = 1}^{u_a} |h_T(t+b) - h_T(t+b-1)|\nonumber\\
  	= & \sum_{a=1}^{k-1} \sum_{b = 1}^{u_a} \left(1 + \sum_{t=1}^{T-b} h_T(t+b) - h_T(t+b-1)\right)\nonumber\\
  	= & \sum_{a=1}^{k-1} \sum_{b = 1}^{u_a} \left(2 - h_T(b)\right)\nonumber\\
  	\leq & K\sum_{a=1}^{k-1} |u_a|\label{step2}
  \end{align}
  for some finite $K$. The step for (\ref{step2}) is given by the fact that $h(x) \in [0, 1]$ for all $x$, thus $$\sum_{b = 1}^{u_a} \left(2 - h_T(b)\right) < \sum_{b = 1}^{u_a} 2 = 2 (u_a-1).$$
  Furthermore, with
  \begin{equation*}
  	h_T(t+b) - h_T(t+b-1) = \begin{cases}
  		-1, & t+b = T+1\\
  		h_T(t+b) - h_T(t+b-1), & t+b = 1, ..., T\\
  		0, &\textnormal{otherwise,}
  	\end{cases}
  \end{equation*}
  it follows that
  \begin{align*}
  	\sum_{t=1}^T |h_T(t+b) - h_T(t+b-1)| &= |-1| + \sum_{t=1}^{T-b} h_T(t+b) - h_T(t+b-1)\\
  	&= 1 + (h_T(T)-h_T(T-1)) + (h_T(T-1)-h_T(T-2))\\
    &\qquad\qquad+ \cdots + (h_T(b+1)-h_T(b))\\
  	&= 1 + (h_T(T)-h_T(b))\\
  	&= 2 - h_T(b),
  \end{align*}
  thus completing the proof.
  \end{proof}
	
    \begin{assumption}
		\label{Asmp 1}
Random process	$\{X_t\}_{t\in\mathbb{Z}}$ is zero mean and strictly stationary, meaning that $\cum(X_{t+u_1}, ..., X_{t+u_k}) = \cum(X_{u_1}, ..., X_{k})$ for all $k>0$, for all $u_1,...,u_k\in\mathbb{Z}$ and for all $t\in\mathbb{Z}$. Furthermore, denoting
	\begin{equation*}
	c_k(u_1, ..., u_{k-1}) \equiv \cum(X_{u_1}, ..., X_{u_{k-1}}, X_{0}),
    \end{equation*}
the following mixing condition holds,
        \begin{equation*}
		\sum_{u_1 = -\infty}^\infty\sum_{u_2 = -\infty}^\infty...\sum_{u_{k-1} = -\infty}^\infty |c_k(u_1, u_2, ..., u_{k-1})| < \infty.
	  \end{equation*}
    \end{assumption}
    
    \begin{definition}
    		We define the $k^{th}$ order cumulant spectrum, $\mathcal{C}_k(f_1, f_2, ..., f_{k-1})$ as
    	\begin{equation*}
    		\mathcal{C}_k(f_1, ..., f_{k-1}) = (2\pi)^{-k+1} \sum_{u_1, ..., u_{k-1} = -\infty}^{\infty} c_k(u_1, ..., u_{k-1}) \exp\left\{ -i 2\pi\sum_{j=1}^{k-1}u_j f_j\right\}.
    	\end{equation*}
    	\end{definition}
    
	\begin{lemma}
    Let $\{X_t\}$ be a random process such that Assumption (\ref{Asmp 1}) holds, then
		\begin{equation*}
			\mathcal{C}_k(f_1, ..., f_{k-1}) =  (2\pi)^{-k+1}\sum_{u_1 = -T}^T\hdots\sum_{u_k = -T}^T c_k(u_1, ..., u_{k-1}) \exp\left\{ -i 2\pi \sum_{j=1}^{k-1}u_j f_j\right\} + o(1).
		\end{equation*}
		\label{lemma: f error}
	\end{lemma}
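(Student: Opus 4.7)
The plan is to view the claim as a truncation-error statement: the definition of $\mathcal{C}_k$ is an absolutely convergent $(k-1)$-fold series indexed by $u_1,\dots,u_{k-1}\in\mathbb{Z}$, and the right-hand side of the lemma is just that same series restricted to the box $[-T,T]^{k-1}$. So I would prove the lemma by showing the tail outside this box vanishes uniformly in $(f_1,\dots,f_{k-1})$ as $T\to\infty$.

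Concretely, I would first write
\begin{equation*}
\mathcal{C}_k(f_1,\dots,f_{k-1}) - (2\pi)^{-k+1}\sum_{u_1=-T}^T\cdots\sum_{u_{k-1}=-T}^T c_k(u_1,\dots,u_{k-1})\exp\Bigl\{-i2\pi\sum_{j=1}^{k-1}u_j f_j\Bigr\} = (2\pi)^{-k+1}\sum_{(u_1,\dots,u_{k-1})\in A_T} c_k(u_1,\dots,u_{k-1})\exp\Bigl\{-i2\pi\sum_{j=1}^{k-1}u_j f_j\Bigr\},
\end{equation*}
where $A_T \equiv \mathbb{Z}^{k-1}\setminus[-T,T]^{k-1}$. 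Then by the triangle inequality and the fact that each complex exponential has modulus one, the right-hand side is bounded in absolute value by $(2\pi)^{-k+1}\sum_{(u_1,\dots,u_{k-1})\in A_T}|c_k(u_1,\dots,u_{k-1})|$.

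Next I would invoke Assumption \ref{Asmp 1}: the mixing condition asserts exactly that $\sum_{u_1,\dots,u_{k-1}=-\infty}^{\infty}|c_k(u_1,\dots,u_{k-1})|<\infty$. For any absolutely summable series, the sum over the complement of an increasing sequence of finite sets exhausting $\mathbb{Z}^{k-1}$ tends to zero; taking the sets $[-T,T]^{k-1}$, we conclude $\sum_{(u_1,\dots,u_{k-1})\in A_T}|c_k(u_1,\dots,u_{k-1})|\to 0$ as $T\to\infty$. Since this bound is independent of $(f_1,\dots,f_{k-1})$, the truncation error is $o(1)$ uniformly in frequency, which is exactly the claim.

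There is essentially no obstacle here; the only subtlety is the implicit claim that the $o(1)$ is uniform in $(f_1,\dots,f_{k-1})$, but this follows immediately from the $|\exp\{\cdot\}|=1$ bound used in the triangle-inequality step. No asymptotics of $\bl_T$ enter this particular lemma — it is a pure absolute-convergence argument that I expect will be invoked later together with Lemma \ref{lemma 1} when approximating cumulants of $J_T(f;\bl_{T-1})$ by integrals against $\mathcal{C}_k$.
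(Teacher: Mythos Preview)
Your proposal is correct and follows essentially the same approach as the paper: write the difference as the tail of the defining series over $\mathbb{Z}^{k-1}\setminus[-T,T]^{k-1}$, bound it via the triangle inequality and $|\exp\{\cdot\}|=1$, and then invoke the absolute summability from Assumption~\ref{Asmp 1} to conclude the tail vanishes. Your version is slightly more explicit (defining $A_T$ and noting uniformity in frequency), but the argument is identical in substance.
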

\begin{proof}
The mixing condition in Assumption \ref{Asmp 1} implies that there exists an $L>0$ such that
$$
\lim_{T\rightarrow\infty} \sum_{u_1 = -T}^T\sum_{u_2 = -T}^T...\sum_{u_{k-1} = -T}^T  |c_k(u_1, u_2, ..., u_{k-1})| = L.
$$
This further implies that
$$\lim_{T\rightarrow\infty} \sum_{u_1, u_2, ..., u_{k-1} \notin\{-T, ..., T\}}  |c_k(u_1, u_2, ..., u_{k-1})| = 0.
$$
It now follows that
    \begin{align*}
        &\left|\mathcal{C}_k(f_1, ..., f_{k-1}) - (2\pi)^{-k+1}\sum_{u_1 = -T}^T\hdots\sum_{u_k = -T}^T c_k(u_1, ..., u_{k-1}) \exp\left\{ -i 2\pi\sum_{j=1}^{k-1}u_j f_j\right\}\right|\\
        =& (2\pi)^{-k+1} \left| \sum_{u_1, u_2, ..., u_{k-1} \notin\{-T, ..., T\}}  c_k(u_1, ..., u_{k-1}) \exp\left\{ -i2\pi \sum_{j=1}^{k-1}u_j f_j\right\} \right|\\
        \leq & (2\pi)^{-k+1} \sum_{u_1, u_2, ..., u_{k-1} \notin\{-T, ..., T\}}  |c_k(u_1, ..., u_{k-1})|
        \longrightarrow \, 0 \quad \textnormal{as}\quad T\longrightarrow \infty.
    \end{align*}
    Thus proving Lemma \ref{lemma: f error}.
    \end{proof}

	\begin{lemma}
Let $X_1,...,X_T$ be a portion of a random process $\{X_t\}$ that follows Assumption \ref{Asmp 1}. Let $\bl_T = (\lambda_t)_{t=1,...,T}$ be a sequence of forgetting factors, with taper $(h_{T,t})_{t=1,...,T}$ and $J_T(f;\bl_{T-1})$ as given in Definition \ref{def:ffp}, and let the $k$th Fourier moment $H_k^{(T)}(f)$ be as defined in Definition \ref{def:kfm}. For all $k>0$, it holds that
		\begin{align*}
			\cum\{&J_T(f_1;\bl_{T-1}), \hdots, J_T(f_k;\bl_{T-1})\} = \\ &H_k^{(T)}\left(\sum_{j=1}^k f_j\right)\sum_{u_1 = -T}^T\hdots\sum_{u_k = -T}^T \exp\left\{-i2\pi\sum_{j=1}^{k-1}f_j u_j\right\} c(u_1, ..., u_{k-1}) + \epsilon_T,
		\end{align*}
		where $\epsilon_T$ is $o(T)$.
		\label{Lemma 2}
	\end{lemma}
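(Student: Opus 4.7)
The approach is to expand the cumulant by multilinearity, reduce the joint cumulant of the $X_{t_i}$ to $c_k$ via strict stationarity and a change of variables, then invoke Lemma \ref{lemma 1} to swap the inner sum over $t$ for the Fourier moment $H_k^{(T)}$, and finally control the resulting errors using the mixing condition in Assumption \ref{Asmp 1}.

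First, by the multilinearity of cumulants,
$$\cum\{J_T(f_1;\bl_{T-1}),\ldots,J_T(f_k;\bl_{T-1})\} = \sum_{t_1,\ldots,t_k\in\Z} h_T(t_1)\cdots h_T(t_k)\, e^{-i2\pi\sum_j f_j t_j}\, \cum(X_{t_1},\ldots,X_{t_k}),$$
where I use the convention that $h_T(t)=0$ outside $\{1,\ldots,T\}$. Strict stationarity (Assumption \ref{Asmp 1}) gives $\cum(X_{t_1},\ldots,X_{t_k}) = c_k(u_1,\ldots,u_{k-1})$ with $u_a = t_a - t_k$. Setting $t = t_k$ and pulling out the phase $e^{-i2\pi\sum_{j=1}^{k-1}f_j u_j}$, the cumulant becomes
$$\sum_{u_1,\ldots,u_{k-1}\in\Z} e^{-i2\pi\sum_{j=1}^{k-1}f_j u_j}\, c_k(u_1,\ldots,u_{k-1})\, A_T(u_1,\ldots,u_{k-1}),$$
where $A_T(u) = \sum_{t} h_T(t+u_1)\cdots h_T(t+u_{k-1}) h_T(t)\, e^{-i2\pi(\sum_j f_j) t}$.

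Next, by Lemma \ref{lemma 1} there exists $K<\infty$ with $|A_T(u) - H_k^{(T)}(\sum_j f_j)| \le K\sum_{a=1}^{k-1}|u_a|$. Writing $A_T(u) = H_k^{(T)}(\sum_j f_j) + r_T(u)$ and substituting yields the main term
$$H_k^{(T)}\bigl(\textstyle\sum_j f_j\bigr)\sum_{u\in\Z^{k-1}} e^{-i2\pi\sum_{j=1}^{k-1} f_j u_j}\, c_k(u)$$
plus the replacement error $E_T := \sum_{u\in\Z^{k-1}} e^{-i2\pi\sum_{j=1}^{k-1} f_j u_j}\, c_k(u)\, r_T(u)$. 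Truncating the main-term sum to $u\in\{-T,\ldots,T\}^{k-1}$ costs at most $|H_k^{(T)}(\sum_j f_j)|\cdot \sum_{u\notin[-T,T]^{k-1}}|c_k(u)| \le T\cdot o(1) = o(T)$, using $|h_T|\le 1$ (so $|H_k^{(T)}|\le T$) and the mixing tail bound already used in the proof of Lemma \ref{lemma: f error}.

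The main obstacle is showing $E_T = o(T)$. The naive Lemma \ref{lemma 1} bound gives $|E_T|\le K\sum_u\sum_a|u_a||c_k(u)|$, which is not even guaranteed to be finite under the assumed $\ell^1$ mixing, let alone $o(T)$. The remedy is a two-scale truncation. Choose any $M_T\to\infty$ with $M_T/T\to 0$ (e.g.\ $M_T=\lfloor T^{1/2}\rfloor$) and split $E_T$ at $|u|_\infty \le M_T$ versus $|u|_\infty > M_T$. On the inner region, Lemma \ref{lemma 1} is sharp and contributes at most
$$K\sum_{|u|_\infty\le M_T}\sum_{a=1}^{k-1}|u_a|\,|c_k(u)| \le K(k-1)M_T \sum_{u\in\Z^{k-1}}|c_k(u)| = O(M_T) = o(T).$$
On the outer region, Lemma \ref{lemma 1} is wasteful; instead use the trivial bound $|r_T(u)| \le |A_T(u)| + |H_k^{(T)}(\sum_j f_j)| \le 2T$, giving a contribution of $2T\sum_{|u|_\infty > M_T}|c_k(u)| = T\cdot o(1) = o(T)$ by the mixing condition. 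Combining the three $o(T)$ bounds produces the stated identity with $\epsilon_T = o(T)$.
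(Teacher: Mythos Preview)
Your proof is correct and follows the same skeleton as the paper: multilinearity, the stationarity substitution $u_a=t_a-t_k$, invocation of Lemma~\ref{lemma 1} to replace the inner $t$-sum by $H_k^{(T)}$, and then control of the remainder. The only substantive difference is in that last step: the paper observes that the effective $u$-range is $\{-T,\ldots,T\}^{k-1}$ (since $A_T(u)=0$ once some $|u_a|\ge T$), so that $T^{-1}\sum_a|u_a|\le k-1$ pointwise, and then invokes dominated convergence with dominating sequence $(k-1)|c_k(u)|$ to conclude $T^{-1}\epsilon_T\to 0$. Your two-scale truncation at $M_T$ achieves the same thing by hand and is arguably more transparent, though your stated worry that the naive Lemma~\ref{lemma 1} bound ``is not even guaranteed to be finite'' overlooks that the sum is already supported on $|u|_\infty<T$; it is automatically finite, just not obviously $o(T)$ without an argument like yours or the paper's.
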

	\begin{proof}
	The cumulant in Lemma \ref{Lemma 2} has the form
	\begin{align*}
		\sum_{t_1}...\sum_{t_k} & h_{T}(t-1)...h_{T}(t_k)\exp\left\{-i2\pi\sum_{j=1}^k f_j t_j\right\} c_k(t_1 - t_k, ..., t_{k-1} - t_k)\\
		& = \sum_{u_1}...\sum_{u_{k-1}}\exp\left\{-i\sum_{j=1}^{k-1} f_j u_j\right\} c_k(u_1, ..., u_{k-1})\\
        &\qquad\qquad\times\sum_t h_{T}(t+u_1)...h_{T}(t+u_{k-1})h_{T}(t)\exp\left\{-i2\pi\sum_{j=1}^k f_j t\right\}\\
		& = \sum_{u_1}...\sum_{u_{k-1}}\exp\left\{-i2\pi\sum_{j=1}^{k-1} f_j u_j\right\} c_k(u_1, ..., u_{k-1})\left[H_k^{(T)}\left(\sum_{j=1}^k f_j\right) + \epsilon_T'\right],
	\end{align*}  
	where $|\epsilon_T'| \leq K\sum_{a=1}^{k-1} |u_a|$ from Lemma \ref{lemma 1}. Expanding the last bracket leads to the expression in Lemma \ref{Lemma 2} where
    \begin{align*}
        |\epsilon_T| &= \left|\sum_{u_1}...\sum_{u_{k-1}}\exp\left\{-i\sum_{j=1}^{k-1}f_j u_j\right\} c_k(u_1, ..., u_{k-1})\epsilon_T'\right|\\
        &\leq \sum_{u_1}...\sum_{u_{k-1}}\left|\exp\left\{-i\sum_{j=1}^{k-1}f_j u_j\right\} c_k(u_1, ..., u_{k-1}) \epsilon_T'\right|\\
        &\leq \sum_{u_1}...\sum_{u_{k-1}}\left|\exp\left\{-i\sum_{j=1}^{k-1}f_j u_j\right\} c_k(u_1, ..., u_{k-1}) \sum_{a=1}^{k-1} |u_a|\right|\\
        &\leq \sum_{u_1}...\sum_{u_{k-1}}c_k(u_1, ..., u_{k-1}) \sum_{a=1}^{k-1} |u_a|.
    \end{align*}
   Furthermore,
	\begin{equation*}
		T^{-1}|\epsilon_T|\leq K \sum_{u_1 = -T}^T\hdots\sum_{u_{k-1} = -T}^T T^{-1}\sum_{a=1}^{k-1} |u_a| c(u_1, ..., u_{k-1}).
	\end{equation*}
    Note that $T^{-1}\sum_{a=1}^{k-1} |u_a|\longrightarrow0$ as $T\longrightarrow\infty$. Together with assumption (\ref{Asmp 1}), by the dominated convergence theorem, it implies that $T^{-1}|\epsilon_T|\longrightarrow0$ as $T\longrightarrow\infty$.
    \end{proof}

	\begin{proposition}
	Let $\bl_T = (\lambda_t)_{t=1,...,T}$ be a sequence of forgetting factors, with taper $(h_{T,t})_{t=1,...,T}$ as given in Definition \ref{def:ffp}, and the $k$th Fourier moment $H_k^{(T)}(f)$ as defined in Definition \ref{def:kfm}. For a portion $X_1,...,X_T$ of a random process $\{X_t\}$ that follows Assumption \ref{Asmp 1}, it holds that
		\begin{equation*}
			\cum\{J_T(f_1;\bl_{T-1}), \hdots, J_T(f_k;\bl_{T-1})\} = H_k^{(T)}\left(\sum_{j=1}^k f_j\right)\mathcal{C}_k(f_1, ..., f_{k-1}) + o(T).
		\end{equation*}
		\label{Theorem Final}
	\end{proposition}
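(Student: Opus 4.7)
My plan is to obtain the proposition by straightforwardly composing the two preceding lemmas and then carefully controlling the residual error terms using the obvious bound on the Fourier moment.

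First I would start from the conclusion of Lemma \ref{Lemma 2}, which already expresses $\cum\{J_T(f_1;\bl_{T-1}),\ldots,J_T(f_k;\bl_{T-1})\}$ as the product of $H_k^{(T)}\bigl(\sum_{j=1}^k f_j\bigr)$ with the truncated multi-index sum $\sum_{u_1=-T}^T\cdots\sum_{u_{k-1}=-T}^T c_k(u_1,\ldots,u_{k-1})\exp\{-i2\pi\sum_j u_j f_j\}$, plus an additive error $\epsilon_T$ that Lemma \ref{Lemma 2} has already certified to be $o(T)$. The second step is to invoke Lemma \ref{lemma: f error}, which identifies the truncated multi-index sum with the cumulant spectrum $\mathcal{C}_k(f_1,\ldots,f_{k-1})$ (up to the convention-dependent normalising constant absorbed in the statement of the proposition) and an additive remainder $\delta_T=o(1)$ as $T\to\infty$.

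Substituting this identification into Lemma \ref{Lemma 2} produces the display
\[
\cum\{J_T(f_1;\bl_{T-1}),\ldots,J_T(f_k;\bl_{T-1})\}
= H_k^{(T)}\!\Bigl(\sum_{j=1}^k f_j\Bigr)\,\mathcal{C}_k(f_1,\ldots,f_{k-1})
+ H_k^{(T)}\!\Bigl(\sum_{j=1}^k f_j\Bigr)\,\delta_T + \epsilon_T,
\]
so all that is left is to argue that the last two summands combine to $o(T)$. The $\epsilon_T$ term is already $o(T)$ by Lemma \ref{Lemma 2}. For the cross term, the critical observation is that the taper values satisfy $0\le h_T(t)\le 1$ by Definition \ref{def:ffp}, hence $|H_k^{(T)}(f)|\le\sum_{t=1}^T[h_T(t)]^k\le T$ for every $f$ and every $k\ge 1$. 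Since $\delta_T=o(1)$, this gives $H_k^{(T)}(\sum f_j)\,\delta_T = o(T)$, and summing the two error contributions keeps the total at $o(T)$, delivering the proposition.

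The genuinely routine parts are the two substitutions; the only place where anything could go wrong is in the book-keeping of the residuals, so the one step I would verify with care is the bound $|H_k^{(T)}(f)|\le T$ and the conclusion that multiplication of an $o(1)$ error by a quantity of order at most $T$ remains $o(T)$. This is really the only conceptual content not already packaged in the preceding lemmas, and it is why the proposition quotes the error as $o(T)$ rather than $o(1)$: the Fourier-moment factor inflates the residual from Lemma \ref{lemma: f error} by at most one factor of $T$, matching the order of the existing $\epsilon_T$ from Lemma \ref{Lemma 2}.
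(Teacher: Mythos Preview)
Your proposal is correct and follows essentially the same route as the paper: apply Lemma~\ref{Lemma 2} to get $H_k^{(T)}(\sum f_j)$ times the truncated sum plus $\epsilon_T=o(T)$, then replace the truncated sum by $\mathcal{C}_k$ via Lemma~\ref{lemma: f error}. In fact your write-up is more careful than the paper's own proof, which passes from the truncated sum to $\mathcal{C}_k$ without explicitly noting that the $o(1)$ discrepancy gets multiplied by $|H_k^{(T)}|\le T$; your observation that this cross term is $T\cdot o(1)=o(T)$ is exactly the justification the paper leaves implicit.
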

\begin{proof}
	\begin{align*}
		\cum\{&J_T(f_1;\bl_{T-1}), \hdots, J_T(f_k;\bl_{T-1})\}\\ &= H_k^{(T)}\left(\sum_{j=1}^k f_j\right)\sum_{u_1 = -T}^T\hdots\sum_{u_k = -T}^T \exp\left\{-i2\pi\sum_{j=1}^{k-1} f_j u_j\right\} c(u_1, ..., u_{k-1}) + \epsilon_T\\
		& = H_k^{(T)}\left(\sum_{j=1}^kf_j\right)\sum_{u_1 = -T}^T\hdots\sum_{u_k = -T}^T \exp\left\{-i2\pi\sum_{j=1}^{k-1} f_j u_j\right\} c(u_1, ..., u_{k-1}) + o(T)\\
		& =  H_k^{(T)}\left(\sum_{j=1}^{k}f_j\right) \mathcal{C}_k(f_1, ..., f_{k-1}) + o(T)\quad \textnormal{From Lemma \ref{lemma: f error}.}
	\end{align*}
	\end{proof}

    \begin{theorem}
		\label{thm: limit J}
		Let $\bl_T = (\lambda_t)_{t=1,...,T}$ be a sequence of forgetting factors, with taper $(h_{T,t})_{t=1,...,T}$, that follows Assumption \ref{assump:FF}, and $X_1,...,X_T$ be a portion of a random process $\{X_t\}$ that follows Assumption \ref{Asmp 1} and has SDF $S(f)$. Then, for any $k>0$ and any set of distinct frequencies $f_1,...,f_K$, where $f_i\notin \{-1/2,0,1/2\}$ for all $i=1,...,K$, then $\left[C_T(\bl_{T-1})\right]^{-1/2} J_T(f_i;\bl_{T-1})$, $i=1,...,K$  are asymptotically independent
		$\mathcal{CN}(0, S(f_i))$ random variables as $T\rightarrow\infty$.
	\end{theorem}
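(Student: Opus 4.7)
The plan is to follow the classical cumulant-method route to (circularly symmetric) complex normality, with Proposition~\ref{Theorem Final} as the engine for cumulant asymptotics. A zero-mean complex random vector converges to a circularly symmetric complex Gaussian iff, asymptotically, (i) all joint cumulants of order $\geq 3$ in its components and their conjugates vanish, (ii) the Hermitian covariance matrix converges to the target, and (iii) the pseudo-covariance (``relation'') vanishes. I would check each of these for $\tilde J_T(f_i) := [C_T(\bl_{T-1})]^{-1/2}J_T(f_i;\bl_{T-1})$, using multilinearity of $\cum$ together with the identity $J_T(f)^* = J_T(-f)$ valid for real $X_t$ and real taper.

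The first preparatory step is to pin down the size of the Fourier moments $H_k^{(T)}(\xi)$. Assumption~\ref{assump:FF} with $\gamma>1$ gives $\sum_s(1-\lambda_s^2)<\infty$, which forces $h_T(t)=\prod_{s=t}^{T-1}\lambda_s$ to be uniformly bounded below by some $c>0$ and above by $1$. Hence $H_k^{(T)}(0)=\sum_t h_T(t)^k \asymp T$ for every fixed $k$, and in particular $C_T(\bl_{T-1}) = H_2^{(T)}(0) \asymp T$. For $\xi$ non-integer, an Abel summation exploiting $h_T(t+1)/h_T(t) = 1/\lambda_t \to 1$, combined with the summability of $1-\lambda_s^2$, would give $H_k^{(T)}(\xi) = o(T)$, in direct analogy with the Dirichlet-kernel cancellation for the DFT of a near-constant sequence.

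With these ingredients, Proposition~\ref{Theorem Final} yields
\begin{align*}
\cum\{J_T(f_i;\bl_{T-1}), J_T(f_j;\bl_{T-1})^*\} &= H_2^{(T)}(f_i - f_j)\,\mathcal{C}_2(f_i) + o(T),\\
\cum\{J_T(f_i;\bl_{T-1}), J_T(f_j;\bl_{T-1})\} &= H_2^{(T)}(f_i + f_j)\,\mathcal{C}_2(f_i) + o(T).
\end{align*}
The exclusion $f_i\notin\{-1/2,0,1/2\}$ together with distinctness of the $f_i$ makes each non-diagonal argument non-integer, so the corresponding $H_2^{(T)}$ is $o(T)$; after dividing by $C_T(\bl_{T-1})\asymp T$, only the diagonal Hermitian variance survives and converges to $S(f_i)$, while the relations and cross-frequency Hermitian covariances vanish, yielding (ii) and (iii). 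For $k\geq 3$, Proposition~\ref{Theorem Final} bounds the joint cumulant by $O(T)$, whereas the normalization divides by $C_T^{k/2}=\Theta(T^{k/2})$, producing an $O(T^{1-k/2})\to 0$ bound and establishing (i). Asymptotic normality then follows by the standard fact that the joint cumulants of all orders determine a Gaussian limit uniquely.

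The step I expect to be most delicate is the taper-moment estimate $H_k^{(T)}(\xi)=o(T)$ for non-integer $\xi$. The usual windowing bandwidth is replaced here by an abstract summability condition on $1-\lambda_s^2$, so one cannot appeal to a standard lemma; the argument will need a careful Abel summation together with the telescoping bound $|h_T(t+1)-h_T(t)|=h_T(t+1)(1-\lambda_t)$ and the finiteness of $\sum_s(1-\lambda_s^2)$ provided by Assumption~\ref{assump:FF}.
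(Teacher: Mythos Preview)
Your proposal is correct and follows essentially the same cumulant-method route as the paper, driven by Proposition~\ref{Theorem Final}: establish $C_T(\bl_{T-1})\asymp T$, show $H_2^{(T)}(\xi)$ is negligible for non-integer $\xi$ so that off-diagonal second-order terms vanish, and use the trivial $|H_k^{(T)}|\leq T$ bound to kill the higher-order cumulants after dividing by $C_T^{k/2}$. The only differences are in execution---you get the lower bound on $C_T$ from the uniform estimate $h_T(t)\geq\prod_{s\geq1}\lambda_s>0$ rather than the paper's $\epsilon$--$T_\epsilon$ argument, and you control $H_k^{(T)}(\xi)$ by Abel summation (which in fact yields boundedness, not merely $o(T)$) instead of the paper's direct comparison with the Dirichlet kernel; you are also more explicit than the paper about bringing in conjugates via $J_T(f)^*=J_T(-f)$ to secure circular symmetry.
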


\begin{proof}
Without loss of generality, we consider frequencies $f_i\in(0,1/2)$ for all $i=1,...,K$. To show joint Gaussianity, it is sufficient to show that $$\cum\left\{\left[C_T(\bl_{T-1})\right]^{-1/2}J_T(f_{i_1};\bl_{T-1}), \hdots, \left[C_T(\bl_{T-1})\right]^{-1/2}J_T(f_{i_k};\bl_{T-1})\right\} \rightarrow 0,\qquad i_1,...,i_k = 1,...,K$$
as $T\rightarrow\infty$ for all $k>2$, and that the first and second order moments are as stated. Proceeding in this way, for $k>2$
	\begin{multline*}
	\cum\left\{\left[C_T(\bl_{T-1})\right]^{-1/2}J_T(f_{i_1};\bl_{T-1}), \hdots, \left[C_T(\bl_{T-1})\right]^{-1/2}J_T(f_{i_k};\bl_{T-1})\right\} \\
		 = \left[C_T(\bl_{T-1})\right]^{-k/2}\left(H_k^{(T)}\left(\sum_{j=1}^k f_{i_j}\right)\mathcal{C}_k(f_{i_1}, ..., f_{i_{k-1}}) + o(T)\right)\\
		\leq T\left[C_T(\bl_{T-1})\right]^{-k/2}\mathcal{C}_k(f_{i_1}, ..., f_{i_{k-1}}) + o\left(T\left[C_T(\bl_{T-1})\right]^{-k/2}\right),
	\end{multline*}
    where, directly from Definition \ref{def:kfm}, we have $\left|H_k^{(T)}\left(\sum_{j=1}^k f_{i_j}\right)\right| \leq T$.
    Under Assumption \ref{def:ffp} we have $1-\lambda_T^2 = o(T^{-\gamma})$. This means that for all $\epsilon>0$ there exists $T_\epsilon$ such that for all $T>T_\epsilon$ we have $\lambda^2_T>1-\epsilon T^{-\gamma}$. Fix $\epsilon>0$ and the corresponding $T_\epsilon$. Then for all $T> T_\epsilon$,
    \begin{align*}
   C_T(\bl_{T-1} &= C_{T_\epsilon}(\bl_{T-1}) [h_T(T_\epsilon)]^2 + \sum_{t = T_\epsilon+1}^T [h_T(t)]^2\\
        &\geq \sum_{t = T_\epsilon+1}^T [h_T(t)]^2\\
        &= \sum_{t = T_\epsilon+1}^T \prod_{b=t}^{T-1}\lambda^2_b\\
        &\geq \sum_{t = T_\epsilon+1}^T \prod_{b=t}^{T-1} 1-\epsilon b^{-\gamma}\\
        &\geq \sum_{t = T_\epsilon+1}^T \left[1-\epsilon\sum_{b=t}^{T-1} b^{-\gamma}\right]\\
        &\geq \sum_{t = T_\epsilon+1}^T \left[1-\epsilon\zeta(\gamma)\right]\qquad \qquad \text{where $\zeta(\cdot)$ is the Riemann zeta function}\\
        &= (T-T_\epsilon) \left[1-\epsilon\zeta(\gamma)\right]\\
        &= O(T).
    \end{align*}
    Therefore, $\left[C_T(\bl_{T-1})\right]^{-1} = O(T^{-1})$. Here, we used the fact that the Riemann zeta function $\zeta(\gamma)$ is finite for $\gamma>1$ and that
    \begin{align*}
        \prod_{i=1}^n (1-a_i) &= (1-a_1)\left[\prod_{i=2}^n (1-a_i)\right]\\
        &= \left[\prod_{i=2}^n (1-a_i)\right] - a_1\left[\prod_{i=2}^n (1-a_i)\right]\\
        &\geq \left[\prod_{i=2}^n (1-a_i)\right] - a_i,
    \end{align*}
    if $\left[\prod_{i=2}^n (1-a_i)\right] < 1$. This condition is satisfied in our case because each term in the product corresponds to the square of a forgetting factor term, $\lambda_b^2$. All forgetting factors $\lambda_b$ are positive and bounded by $1$, so the same applies to any product of them. By using the previous expression repeatedly, by induction, we see that
    $$ \prod_{i=1}^n (1-a_i) \geq 1-\sum_{i=1}^n a_i.$$
   Therefore, for $k>2$,
    \begin{multline*}
        \cum\left\{\left[C_T(\bl_{T-1})\right]^{-1/2}J_T(f_{i_1};\bl_{T-1}), \hdots, \left[C_T(\bl_{T-1})\right]^{-1/2}J_T(f_{i_k};\bl_{T-1})\right\}  \\ \leq  T\left[C_T(\bl_{T-1})\right]^{-k/2}\mathcal{C}_k(f_{i_1}, ..., f_{i_{k-1}}) + o\left(T\left[C_T(\bl_{T-1})\right]^{-k/2}\right)\\
        \longrightarrow 0\ \ \ \ \ \ \ \textnormal{ as }\  \left[C_T(\bl_{T-1})\right]^{-1} = O(T^{-1}).
    \end{multline*}
 All joint cumulants of order $k>2$ going to 0 implies the random variables $\left[C_T(\bl_{T-1})\right]^{-1/2} J_T(f_i;\bl_{T-1})$, $i=1,...,K$ are jointly Gaussian. We next consider the $k=2$ order cumulants. It is true that
   \begin{multline*}
   	\cum\left\{\left[C_T(\bl_{T-1})\right]^{-1/2}J_T(f_i;\bl_{T-1}),\left[C_T(\bl_{T-1})\right]^{-1/2}J_T(f_j;\bl_{T-1})\right\} \\
   	= \left[C_T(\bl_{T-1})\right]^{-1}\cum\left\{J_T(f_i;\bl_{T-1})J_T(f_j;\bl_{T-1})\right\}\\
   	= \left[C_T(\bl_{T-1})\right]^{-1}\left(H_2^{(T)}\left(f_i + f_j\right)\mathcal{C}_2(f_i) + o(T)\right).
   \end{multline*}
 For $f_i,f_j \in (0,1/2)$, then $f_1 + f_2 \neq 0 \text{ (mod }1)$. Fixing $\epsilon>0$ and the corresponding $T_\epsilon$ such that for all $T>T_\epsilon$ we have $\lambda^2_T>1-\epsilon T^{-\gamma}$, then
 \begin{align*}
 	\left|\left[\sum_{t=1}^T\exp{(2\pi i ft)}\right] - H_2^{(T)}(f)\right| &= \left|\sum_{i=1}^T (1-\lambda_t^2)\exp{(2\pi i ft)}\right|\\
 	&\leq \sum_{i=1}^T (1-\lambda_t^2)\\
 	&\leq \sum_{t=1}^{T_\epsilon} (1-\lambda_t^2) + \sum_{b=T_\epsilon}^T \epsilon b^{-\gamma}\\
 	&\leq T_\epsilon + \sum_{b=T_\epsilon}^T \epsilon b^{-\gamma}\\
 	&\leq T_\epsilon + \epsilon\zeta(\gamma),
 \end{align*}
 where $\zeta(\cdot)$ is the again the Riemann zeta function. Therefore, 
 \begin{equation*}
 	|H_2^{(T)}(f)| \leq \left|\sum_{t=1}^T\exp{(2\pi i ft)}\right| + T_\epsilon + \epsilon\zeta(\gamma) = \frac{\sin(T\pi f)}{\sin(\pi f)}+ T_\epsilon + \epsilon\zeta(\gamma).
 \end{equation*}
 As the right-hand side is bounded for all values of $T$ if $f\neq 0$, if follows that
 \begin{equation*}
 	\left[C_T(\bl_{T-1})\right]^{-1}\left(H_2^{(T)}\left(f_1 + f_2\right)\mathcal{C}_2(f_1) + o(T)\right) \longrightarrow 0 \,\, \text{as} \,\, T\longrightarrow \infty.
 \end{equation*}
Together with asymptotic Gaussianity, this also shows that $\left[C_T(\bl_{T-1})\right]^{-1/2} J_T(f_i;\bl_{T-1})$, $i=1,...,K$ are also asymptotically independent.
   
   We are left with showing that the first and second order moments of $\left[C_T(\bl_{T-1})\right]^{-1/2}J_T(f_i;\bl_{T-1})$ are as stated. The first order moment is immediate from the fact that $\{X_t\}_{t\in\mathbb{Z}}$ is zero mean. We note that for any $f_i, f_j\in(-1/2,1/2]$ such that $f_i + f_j \equiv 0 \text{ (mod }1)$, then $H_2^{(T)}(f_i + f_j) = H_2^{(T)}(0) = C_T(\bl_{T-1})$, and therefore asymptotically
$$
   	\cum\left\{\left[C_T(\bl_{T-1})\right]^{-1/2}J_T(f_i;\bl_{T-1}),\left[C_T(\bl_{T-1})\right]^{-1/2}J_T(f_j;\bl_{T-1})\right\} \longrightarrow \mathcal{C}_2(f_i) = S(f_i).
 $$
Note that $\left[C_T(\bl_{T-1})\right]^{-1/2}J_T(f_i;\bl_{T-1})$ and $\left[C_T(\bl_{T-1})\right]^{-1/2}J_T(-f_i;\bl_{T-1})$ are the same random variable, and therefore
\begin{multline*}
\Var\{\left[C_T(\bl_{T-1})\right]^{-1/2}J_T(f_i;\bl_{T-1})\}\\ = \cum\left\{\left[C_T(\bl_{T-1})\right]^{-1/2}J_T(f_i;\bl_{T-1}),\left[C_T(\bl_{T-1})\right]^{-1/2}J_T(f_i;\bl_{T-1})\right\} \\
= \cum\left\{\left[C_T(\bl_{T-1})\right]^{-1/2}J_T(f_i;\bl_{T-1}),\left[C_T(\bl_{T-1})\right]^{-1/2}J_T(-f_i;\bl_{T-1})\right\}  
\longrightarrow S(f_i),
\end{multline*}  
as $T\longrightarrow \infty$.
    \end{proof}

\section{Dynamics in gradient descent estimation for AR(2) processes}

In Section 5.2 we show the performance of the forgetting factor Whittle estimator when applied to an AR(2) process with a change-point. In particular, the change-point is present for only $\phi_1$, while $\phi_2$ is kept constant before and after the change-point. However, the figures in the section reveal that the estimates of $\phi_2$ experience a jump as the change-point in $\phi_1$ occurs. More insight is provided by studying the shape the Whittle likelihood function as more data is observed after the change-point. Figure \ref{fig:AR2statespace} shows a realisation of the Whittle likelihood function for the AR(2) process in Section 5.2 as it transitions across the change-point. Recall that the change-point in the process occurs at $T=10000$. In this example we set $\lambda=0.999$. Right at the change-point, when no new data has been observed, the likelihood in maximized at about $(-0.81, -1.46)$. Similarly, after observing enough new data, $300$ data points in this example, the likelihood is maximized at about $(-0.81, 1.46)$. Unsurprisingly, those points are close to the true parameters of the process before and after the change-point, respectively. The interesting behavior occurs when the data in the estimator's `memory' is a mixture of the two processes. The Whittle likelihood of a mixture of two AR(2) processes is maximized by a high value of $\phi_2$ and an in-between value for $\phi_1$. Intuitively, this is due to the shape of the spectral density function of an AR(2) process. It consists of a peak at a central frequency, dependent on $\phi_1$, with the steepness of the peak dependent on $\phi_2$. A low and flat peak fits the spectral density function of the mixture better than a steep peak somewhere in the middle would. This is indeed the path taken by the estimators for $\phi_1$ and $\phi_2$, as seen in dashed line in Figure \ref{fig:AR2statespace} and the corresponding spectral density functions in Figure \ref{fig:S_AR2 comp}.
\begin{figure}[H]
    \centering
    \includegraphics[width=0.8\linewidth]{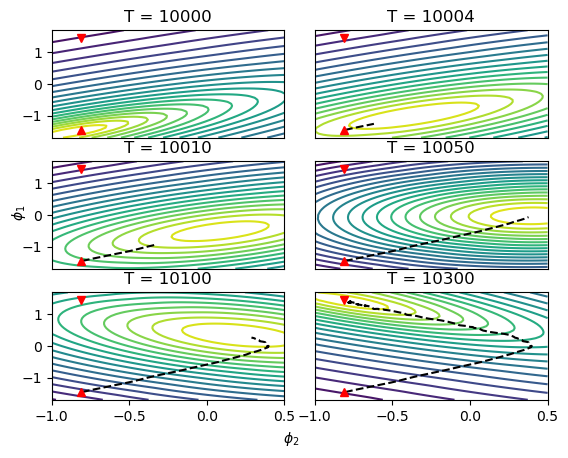}
    \caption{Contour plots for the Whittle likelihood of an AR(2) process as it undergoes a change-point. The change-point occurs at $T=10000$. The dashed line shows estimates for $\phi_1$ and $\phi_2$, computed via the forgetting factor periodogram $\widehat{S}_T(f;\lambda)$, for $T=10000,\dots,10300$. The red triangles denote the true parameter values before and after the change-point.}
    \label{fig:AR2statespace}
\end{figure}
\begin{figure}[H]
    \centering
    \includegraphics[width=0.7\linewidth]{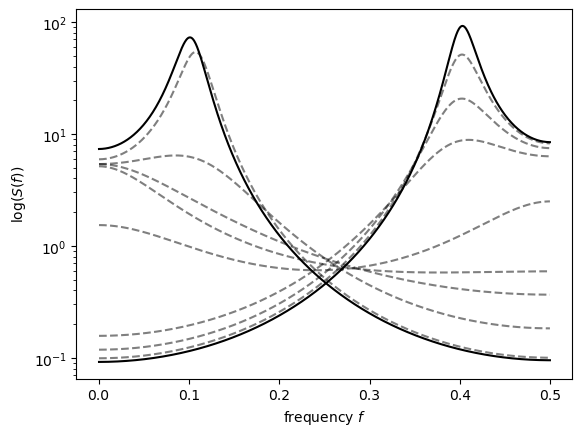}
    \caption{AR(2) spectral density functions derived from the Whittle estimators $\widehat{\phi}_T$, for $T$'s in a representative subset of $\{10000,\dots, 10300\}$. The solid lines show the function for $T=10000$ and $T=10300$. The red triangles denote the true parameter values before and after the change-point.}
    \label{fig:S_AR2 comp}
\end{figure}

\section{Effect of Beta prior on the Adaptive Forgetting Factor Whittle Estimation}

In Section 5.3 we introduce a regularization factor in the Whittle likelihood in the form of a Beta prior on the forgetting factors $\boldsymbol{\lambda}_T$. The intent behind this modification is to induce a predisposition in the algorithm towards a forgetting factor closer to 1. In Figure \ref{fig:alpha effect} the estimated $\boldsymbol{\lambda}_T$'s for three distinct processes are shown. The first two rows are based on AR(2) processes and the bottom row is based on a AR(3) process. The columns indicate whether the Beta prior was used in the estimation procedure, with the left column not including said regularization factor. Note that the top right figure was already presented in the main body on this manuscript as Figure 5. Indeed, the top row of Figure \ref{fig:alpha effect} demonstrates the effect of the Beta prior on the same AR(2) process used for Figures 3-5. In the middle row we observe $\boldsymbol{\lambda}_T$ for an AR(2) process where both $\phi_1$ and $\phi_2$ undergo a change-point, as opposed to only $\phi_1$ in the previous row. Finally, in the bottom row we have the forgetting factor sequence for the same AR(3) process used for Figures 6-7. In all cases, the use of the Beta prior results in a similar dynamic in the sequences; a dip at the change-point quickly followed by the sequence rising back up to mean values between 0.99 and 1.
\begin{figure}[H] 
    \centering
    \includegraphics[width=\linewidth]{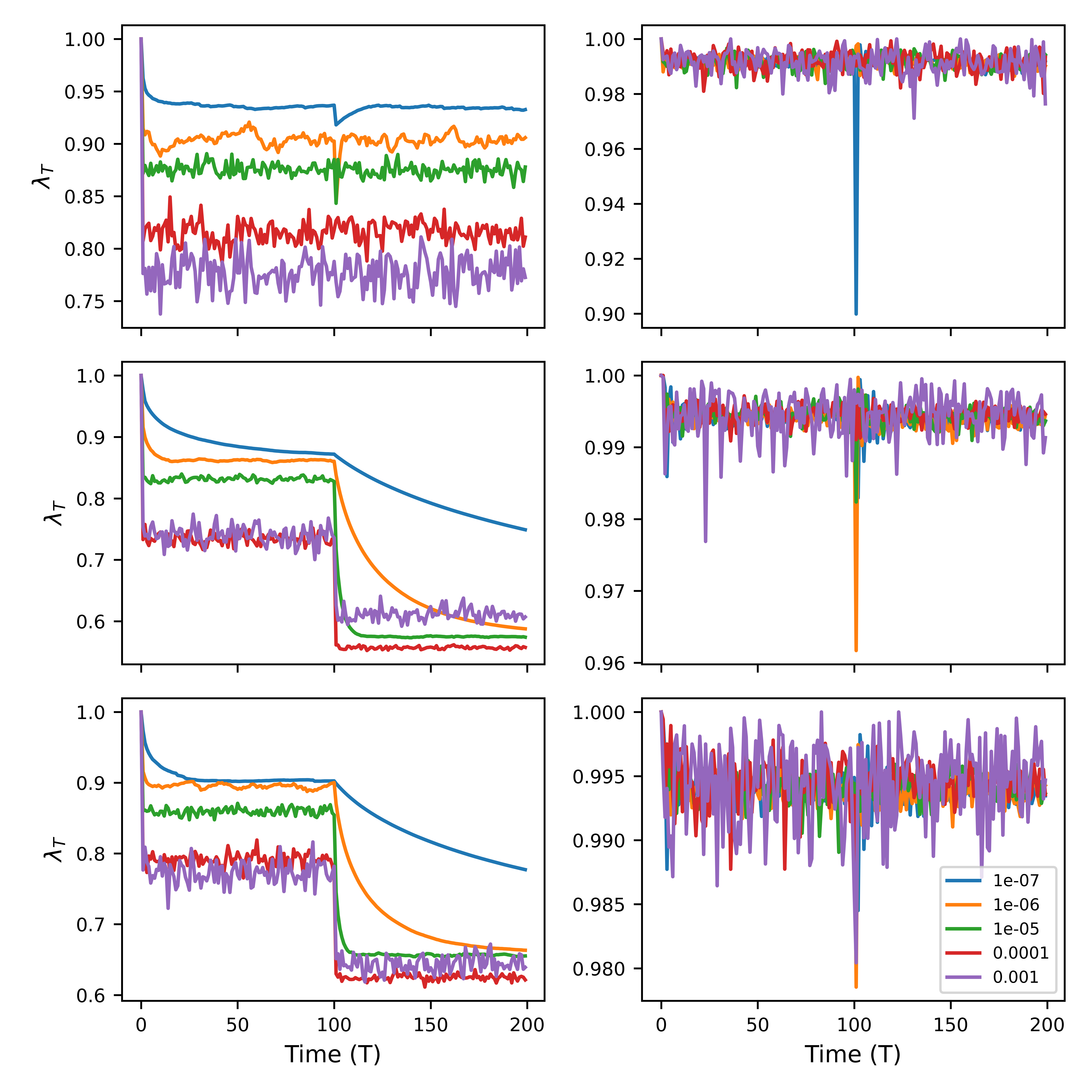}
    \caption{Average forgetting factors $\boldsymbol{\lambda}_T$ associated with the Adaptive Forgetting Factor Whittle Estimation for 2 AR(2) processes (two two rows) and an AR(3) process (bottom row). Figures on the left showcase the average forgetting factor sequences when the Beta prior introduced in Section 5.3 is not used. Whereas, the figures on the right present $\boldsymbol{\lambda}_T$ when said regularization is included in the optimization procedure. Averages obtained via 1000 Monte Carlo simulations.}
    \label{fig:alpha effect}
\end{figure}

\end{appendices}

\bibliography{bibliography.bib}

\end{document}